\def\Patrascu{P\u atra\c scu}
\def\Matousek{Matou\v{s}ek}
\def\Holder{H\"{o}lder}
\newcommand{\R}{\mathbb{R}}
\newcommand{\tildeO}{\Tilde{O}}
\newcommand{\calD}{\mathcal{D}}
\newcommand{\calR}{\mathcal{R}}
\newcommand{\eps}{\varepsilon}
\newcommand{\OO}{\Tilde{O}}
\newcommand{\OOO}{O^*}
\newcommand{\IGNORE}[1]{}
\newtheorem{subproblem}[theorem]{Subproblem}
\title{Dynamic Geometric Connectivity in the Plane with Constant Query Time}
\author{Timothy M. Chan}{Department of Computer Science, University of Illinois at Urbana-Champaign, USA}{tmc@illinois.edu}{https://orcid.org/0000-0002-8093-0675}{Work supported by NSF Grant CCF-2224271.}
\author{Zhengcheng Huang}{Department of Computer Science, University of Illinois at Urbana-Champaign, USA}{zh3@illinois.edu}{}{}
\titlerunning{Dynamic Geometric Connectivity in the Plane with Constant Query Time}
\authorrunning{T.\,M. Chan and Z. Huang}
\keywords{Connectivity, dynamic data structures, geometric intersection graphs}
\begin{document}

\maketitle

\begin{abstract}
We present the first fully dynamic connectivity data structures for geometric intersection graphs achieving \emph{constant} query time and \emph{sublinear} amortized update time for many classes of geometric objects in 2D\@.  Our data structures can answer connectivity queries between two objects, as well as ``global'' connectivity queries (e.g., deciding whether the entire graph is connected). Previously, the data structure by Afshani and Chan (ESA'06) achieved such bounds only in the special case of axis-aligned line segments or rectangles but did not work for arbitrary line segments or disks, whereas the data structures by Chan, P\u atra\c scu, and Roditty (FOCS'08) worked for more general classes of geometric objects but required $n^{\Omega(1)}$ query time and could not handle global connectivity queries.

Specifically, we obtain new data structures with $O(1)$ query time and amortized update time near $n^{4/5}$, $n^{7/8}$, and $n^{20/21}$ for axis-aligned line segments, disks, and arbitrary line segments respectively. Besides greatly reducing the query time, our data structures also improve the previous update times for axis-aligned line segments by Afshani and Chan (from near $n^{10/11}$ to $n^{4/5}$) and for disks by Chan, P\u atra\c scu, and Roditty (from near $n^{20/21}$ to $n^{7/8}$).
\end{abstract}

\section{Introduction}

\emph{Dynamic graph connectivity}---maintaining an undirected graph under edge insertions and deletions to answer connectivity queries---is a popular topic in data structure design and dynamic graph algorithms~\cite{HenzingerK99,HolmLT01,PatrascuT07,ChuzhoyGLNPS20,HuangHKPT23}. At STOC'02, Chan~\cite{Chan06} initiated the study of dynamic connectivity in \emph{geometric} settings: 
\begin{quote}
    Maintain a set of $n$ geometric objects, subject to insertions and deletions of objects, so that we can quickly determine whether two query objects are connected in the intersection graph.
\end{quote}
The challenge here is that a single insertion/deletion of an object may change $\Omega(n)$ edges in the intersection graph in the worst case, and so we can't afford to maintain the intersection graph explicitly.

\subparagraph*{Previous work by Chan.} Chan~\cite{Chan06} obtained the first fully dynamic data structure with sublinear ($O(n^{0.94})$) amortized update time and sublinear ($\OO(m^{1/3})$)%
\footnote{Throughout the paper, the $\OO$ notation hides $\log^{O(1)}n$ factors, and the $\OOO$ notation hides $n^\eps$ factors for an arbitrarily small constant $\eps>0$.} 
query time for axis-aligned line segments or rectangles in 2D
or axis-aligned boxes in any constant dimension.  The exponent ($0.94$) arose from 
matrix multiplication.  The heart of his solution was a data structure
for an extension of dynamic graph connectivity with vertex (re)insertions and deletions,
called \emph{dynamic subgraph connectivity}---maintaining a (sparse) graph with $m$ edges and a subset $A$ of ``active'' vertices under
(re)insertions and deletions in $A$, so that we can quickly determine whether two vertices are connected in
the subgraph induced by $A$.  As Chan observed, the geometric problem in the case of axis-aligned
boxes can be reduced to dynamic subgraph connectivity (ignoring polylogarithmic factors), by using so-called 
``biclique covers'' (which are related to range searching or intersection searching).  

For more general classes of geometric objects such as arbitrary line segments, biclique covers
have superlinear complexity, so Chan's reduction was not sufficient to yield sublinear update time, unfortunately.

\subparagraph*{Previous work by Chan, \Patrascu, and Roditty.}
A subsequent paper by Chan, \Patrascu, and Roditty~\cite{ChanPR11} (FOCS'08) rectified the problem:
They first obtained a better data structure for dynamic subgraph connectivity, avoiding
fast matrix multiplication, and then incorporated range searching techniques into their data
structure in a more efficient way, thereby obtaining data structures for dynamic geometric connectivity
with sublinear amortized update time for virtually all families of objects with constant description complexity.
For example, for axis-aligned line segments or rectangles in 2D and axis-aligned boxes in
any constant dimension, the amortized update time is $\OO(n^{2/3})$
and query time is $\OO(n^{1/3})$;
for arbitrary line segments in 2D, the amortized update time is $\OOO(n^{9/10})$ and
query time is $\OO(n^{1/5})$;
for disks in 2D, the amortized update time is $\OOO(n^{20/21})$ and query time
is $\OOO(n^{1/7})$.

Recently, Jin and Xu~\cite{JinX22} proved an $\Omega(m^{2/3-\eps})$ conditional lower bound
on the amortized update time for the dynamic subgraph connectivity problem for any data structure
with $O(m^{1-\eps})$
query time,\footnote{Throughout the paper, $\eps$ denotes
an arbitrarily small positive constant.}
assuming the ``Combinatorial 4-Clique Hypothesis'', if we restrict
ourselves to ``combinatorial'' algorithms that do not
use algebraic techniques for fast matrix multiplication.  Chan~\cite{Chan06} observed that dynamic subgraph connectivity
with $m$ edges can be reduced back to dynamic geometric connectivity for $O(m)$ axis-aligned line segments or boxes in 3D (roughly speaking, because in 3D, axis-aligned segments or boxes can ``simulate'' arbitrary graphs), and so
Jin and Xu's proof implies conditional optimality of Chan, \Patrascu, and Roditty's $\OO(n^{2/3})$ update bound for axis-aligned boxes
in dimension 3 and above, at least with respect to combinatorial algorithms.

However, one major drawback in Chan, \Patrascu, and Roditty's data structures is that the query times are super-polylogarithmic
(unlike known static data structures).  One can envision applications where queries occur much more frequently
than updates.  Although trade-offs with smaller query time and larger update time are
possible with Chan, \Patrascu, and Roditty's data structures, they do not achieve sublinear update
time in the regime of constant or polylogarithmic query time. 

In some ways, their data structures
encode connectivity information only ``implicitly''.  Thus, another drawback is that they cannot handle ``global'' connectivity
queries, e.g., deciding whether the entire intersection graph is connected, or counting the
number of connected components.  Abboud and Vassilevska Williams~\cite{AbboudW14} proved an $\Omega(m^{1-\eps})$
conditional lower bound on the query/update time for global connectivity
queries for the dynamic subgraph connectivity problem, assuming the Strong Exponential-Time Hypothesis.  So, 
sublinear query and update time are conditionally not possible for global connectivity queries for
axis-aligned segments and boxes in dimension 3 and above.
Still, a fundamental question remains as to whether sublinear update time bound is achievable for global connectivity queries for various types of geometric objects in 2D (which is where the most natural geometric applications occur).

\subparagraph*{Previous work by Afshani and Chan.}
A different (earlier) data structure by Afshani and Chan~\cite{AfshaniC09} (ESA'06) addressed some of these drawbacks.
Specifically, they obtained $O(1)$ query time and $\OO(n^{10/11})$ update time in the case of
axis-aligned line segments or rectangles in 2D\@.  Unfortunately, they were unable to extend their method to
other types of objects in 2D, such as arbitrary line segments or disks.

\subparagraph*{Other related work.}
Better data structures are known for some easier special cases; for example, a recent SoCG'22 paper
by Kaplan et al.~\cite{KaplanKKKMRS22} gave polylogarithmic results for unit disks, or
disks with small maximum-to-minimum-radius ratio,
or for arbitrary disks in the incremental (insertion-only) or decremental (deletion-only) case (see also Kaplan et al.'s SOSA'24 paper~\cite{KaplanKKMR24}). 
However, these results are not applicable to arbitrary disks in the fully dynamic setting.

\subparagraph*{New results.}
We obtain new data structures for dynamic geometric connectivity for different types of objects in 2D,
as summarized in Table~\ref{tab:results}.%
\footnote{For the previous result on simplices in $\R^d$, Chan, \Patrascu, and Roditty~\cite{ChanPR11} originally stated query time $\OOO(n^{1/(2d+1)})$ and update time $\OOO(n^{1-1/d(2d+1)})$, but they mistakenly assumed bounds for simplex range searching extend to simplex intersection searching. However, simplex intersection searching in $\R^d$ does reduce to semialgebraic range searching
in $\R^{O(d^2)}$, so their framework implies the sublinear bounds shown in the table.}

Qualitatively, we obtain the first data structures with \emph{constant} query time and sublinear
amortized update time for arbitrary line segments in 2D as well as disks in 2D, and the first data structures
that can handle global connectivity queries for such objects.  The line segment
case is especially general (since it can handle arbitrary  polylines of bounded complexity).  Our approach
can in fact handle any family of semialgebraic curves with constant description complexity in 2D, though
the exponent in the update bound depends on the degree of the curves.

Quantitatively, for the case of disks in 2D, our data structure not only greatly reduces the query time
of Chan, \Patrascu, and Roditty's method but does so without hurting the update time---in fact, the 
update bound is \emph{improved} (albeit slightly), from near $n^{20/21}$ to $n^{7/8}$.
For the special case of axis-aligned line segments in 2D, our data structure also improves
the update time of Afshani and Chan's method, from near $n^{10/11}$ to $n^{4/5}$.

\begin{table}[]
\renewcommand{\arraystretch}{1.1}
    \centering
    \begin{tabular}{|l|ll|l|l|}
        \hline
        Type of objects  & Query time & Update time (amort.) & Ref.\\
        \hline
        Axis-aligned segments/boxes in $\R^d$ & $\Tilde{O}(n^{1/3})$ & $\Tilde{O}(n^{2/3})$ &   \\
        Arbitrary line segments in $\R^2$ & $O^*(n^{1/5})$ & $O^*(n^{9/10})$ & \\
        Disks in $\R^2$ & $O^*(n^{1/7})$ & $O^*(n^{20/21})$ &  \multirow{4}{*}{\cite{ChanPR11}}\\
        Balls in $\R^d$ & $O^*(n^{\frac{1}{2d+3}})$ & $O^*(n^{1-\frac{1}{(d+1)(2d+3)}})$ & \\
        Simplices in $\R^d$ & $O^*(n^{\frac{1}{\Theta(d^2)}})$ & $O^*(n^{1-\frac{1}{\Theta(d^4)}})$ & \\
        \hline
        Axis-aligned segments/rectangles in $\R^2$ & $O(1)$ & $\Tilde{O}(n^{10/11})$ & \cite{AfshaniC09} \\
        \hline
        Axis-aligned segments in $\R^2$ & $O(1)$ & $\tildeO(n^{4/5})$ &  \multirow{3}{*}{new} \\
        Arbitrary line segments in $\R^2$ & $O(1)$ & $\OOO(n^{20/21})$ & \\
        Disks in $\R^2$ & $O(1)$ & $O^*(n^{7/8})$ & \\
        \hline
    \end{tabular}
    \vspace{1ex}
    \caption{Previous and new  results on dynamic geometric connectivity.}
    \label{tab:results}
\end{table}

%\subparagraph*{Organization of this paper} In Section~\ref{sec:recap}, we will give a short recap of Afshani and Chan's data structure for axis-aligned line segments in $\R^2$. We will also identify a major obstacle that prevents Afshani and Chan's method from extending beyond axis-aligned boxes. In Section~\ref{sec:key} and~\ref{sec:dynconn}, we tackle this obstacle, and for the first time achieve $O(1)$ query time and sublinear update time for disks and arbitrary line segments in $\R^2$. In fact, the same idea allows us to improve Afshani and Chan's result for axis-aligned boxes. Finally, in Section~\ref{??}, we use a separator-based technique to improve the update time of the results that we achieve in Section~\ref{sec:dynconn}.

\subparagraph*{Techniques and new ideas.}
Our approach builds on Afshani and Chan's method~\cite{AfshaniC09}, based on the key notion of 
\emph{equivalence classes} of connected components.  They proved combinatorial bounds
and described efficient algorithms for computing and maintaining such classes.  Although
their combinatorial bounds actually hold for arbitrary geometric objects in 2D, their algorithms are
specialized to \emph{axis-aligned} objects in 2D.  We describe a different way to
compute and maintain equivalence classes of components, based on a \emph{repeated splitting}
idea which is simple in hindsight, but has somehow been overlooked by researchers for more than 
a decade.  This idea allows us to reduce equivalence-class data structures 
to \emph{colored} variants of range or intersection searching, which can be solved by known geometric data structuring techniques.
This new idea 
%is already sufficient to yield
allows us to obtain constant query time and sublinear amortized
update time for all the types of 2D geometric objects considered here; see Sections~\ref{sec:key}--\ref{sec:dynconn}.

Of independent interest is also a new variant of Afshani and Chan's combinatorial lemma, which helps in reducing the update time further; see Section~\ref{sec:new-comb-lemma}.

In Appendix~\ref{apd:sep}, we obtain a still further improvement for the case of disks
%and arbitrary line segments, 
by incorporating another new idea---namely, the usage of \emph{separators} (specifically, Smith and Wormald's geometric
 separator theorem~\cite{SmithW98}).
 This is interesting, as separators have not been widely used before in the context of dynamic geometric connectivity (they have been used in dynamic subgraph connectivity for planar graphs~\cite{FrigioniI00} but not for more general classes of geometric intersection graphs).

%more new ideas---namely, the usage of \emph{separators} (specifically, Smith and Wormald's geometric separator theorem~\cite{SmithW98} for the case of disks, and the standard planar graph separator theorem for the case of line segments), in order to further speed up equivalence class computation.  This is interesting, as separators have not been widely used before in the context of dynamic geometric connectivity (they have been used in dynamic subgraph connectivity for planar graphs~\cite{FrigioniI00} but not for more general classes of geometric intersection graphs).

%\timothy{does our result for axis-aligned segments work for rectangles? and could we say we improve space??  (CPR has superlinear space)}

\section{Recap of Afshani and Chan's Method}
\label{sec:recap}

Before describing our new data structures, we first review Afshani and Chan's previous method for dynamic connectivity for axis-aligned segments/rectangles in 2D~\cite{AfshaniC09}. To illustrate the overall ideas, for simplicity we will focus on the \emph{offline} setting of the problem, where the upcoming updates are known in advance (the online setting requires more work and slightly worse update time),
%%, i.e., near $n^{10/11}$ instead of $n^{6/7}$).
and we will focus on just the case of axis-aligned segments.

%explain their method for the slightly easier of case of axis-aligned line segments.

As often seen in previous works~\cite{Chan03,Chan06}, Afshani and Chan used the standard technique of handling insertions lazily and rebuilding periodically. Each period phase consists of a preprocessing step followed by $q$ updates, so that the preprocessing cost can be amortized. In the offline setting, we know which objects will be updated in each phase.

Let $S$ be the objects that exist at the beginning of the phase and will not be deleted during the phase, and let $Q$ be the sequence of objects $s_1,\ldots,s_q$ that will be updated during the phase. (Thus, $S$ stays static during a phase, whereas $Q$ is small.)  Afshani and Chan defined two connected components of $S$ to be {\em equivalent} if they intersect the exact same set of objects from $Q$. Clearly, for any $s\in Q$, if $s$ intersects an arbitrary component from a class $L$, then $s$ intersects all components from $L$. More importantly, using the fact that distinct connected components never intersect each other and may be viewed as a collection of disjoint ``curves'' or ``strings'' (not necessarily of constant complexity) when the objects are in $\R^2$, Afshani and Chan proved a polynomial upper bound on the number of equivalence classes at any moment. The general combinatorial lemma is stated as follows.

\begin{lemma}[Afshani and Chan's combinatorial lemma~\cite{AfshaniC09}]
    \label{lem:q3-classes}
    Consider a set $Q$ of $q$ disjoint regions with simple connected boundaries and a set $C$ of disjoint curves in $\R^2$. Then $C$ consists of at most $O(q^3)$ equivalence classes with respect to $Q$.  (This bound is tight.)
\end{lemma}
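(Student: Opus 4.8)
Two curves of $C$ are equivalent precisely when they meet the same subset of $Q$, so writing $\sigma(c)=\{i:c\cap Q_i\ne\emptyset\}$ for the \emph{signature} of a curve, the task is to bound the number of distinct signatures by $O(q^3)$. The first step is a harmless normalization: replace each curve by an inclusion‑minimal connected subset that still realizes the same signature. The replacements stay pairwise disjoint, and a minimal one is a tree‑like curve that meets each region of its signature in at most one leaf, hence has at most $q$ leaves and $O(q)$ branch points. Since signatures of size $\le 1$ account for only $q+1$ classes, the content is entirely in bounding the classes with large signatures.

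The plan is then an induction on $q$, with $f(q)$ denoting the maximum possible number of classes. Deleting region $Q_q$ can only merge classes: each equivalence class with respect to $\{Q_1,\dots,Q_{q-1}\}$ either has all of its curves meeting $Q_q$, or none of them, or splits into exactly two classes when $Q_q$ is put back. Hence $f(q)\le f(q-1)+S(q)$, where $S(q)$ is the number of classes with respect to $\{Q_1,\dots,Q_{q-1}\}$ containing at least one curve that meets $Q_q$ — equivalently, the number of distinct signatures over $\{Q_1,\dots,Q_{q-1}\}$ realized by pairwise‑disjoint curves that \emph{all} intersect the single region $R:=Q_q$. If $S(q)=O(q^2)$, then $f(q)\le f(q-1)+O(q^2)$ with $f(O(1))=O(1)$ solves to $f(q)=O(q^3)$, as desired.

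Everything therefore reduces to the claim that pairwise‑disjoint curves all meeting one region $R$ realize only $O(q^2)$ signatures over the remaining $q-1$ regions — intuitively, routing every curve through a single connected region must cost a full factor of $q$ relative to the unrestricted bound $f(q-1)=\Theta(q^3)$. To prove this I would contract $R$ to a point $p$: afterwards all the curves in question pass through $p$ and are otherwise pairwise disjoint, so their union is a connected planar complex hanging off $p$, and a minimal representative of each signature becomes a tree attached at $p$ with $O(q)$ branch points. I would then run an Euler/planarity count on the planar graph formed by all these trees together with the region boundaries $\partial Q_1,\dots,\partial Q_{q-1}$, the key point being that two \emph{distinct} signatures must be witnessed by portions of the plane that, attaching at the single point $p$, are forced into a cyclic/laminar relationship around $p$; this turns ``pairwise disjoint and threaded through one region'' into a bound of $O(q)\cdot O(q)$ rather than $O(q)\cdot f(q-1)$.

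I expect this last step to be the main obstacle, for the familiar reason in such arguments: a minimal curve may still cross a given region boundary many times, so the naive vertex count of the planarity graph is far too large, and the real work is to show that repeated excursions out of $R$ that fail to reach a new region are ``wasted'' and can be charged to the laminar structure around $p$ without inflating the count. A cleaner alternative worth pursuing in parallel is a direct global argument: attach to each class a constant‑size \emph{fingerprint} drawn from the disjoint‑region configuration — for instance the cell containing the curve in each of a constant number of $O(q)$‑complexity planar subdivisions associated with $Q$ (Voronoi/Delaunay‑type structures) — chosen so that the signature is a function of the fingerprint; three such subdivisions would again give $O(q)^3$ classes and would explain the exponent directly. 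Finally, for the tightness remark one exhibits $\Theta(q)$ regions in general position together with pairwise‑disjoint curves whose realized signatures are indexed by triples (two ``endpoint'' regions and one further parameter), giving $\Omega(q^3)$ classes; as the lemma only asserts this in passing, describing such a construction in detail is a secondary matter.
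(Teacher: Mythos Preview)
Your inductive framework is reasonable, but the proof has a genuine gap at exactly the place you flag: you never establish $S(q)=O(q^2)$. The sketch---contract $R=Q_q$ to a point, observe that all curves now pass through a common point, and run an Euler/planarity count---does not go through as stated. After contraction the curves are no longer disjoint (they all share $p$), and the ``laminar structure around $p$'' you invoke is precisely what needs to be proved; nothing in your outline shows why two curves with distinct $(q-1)$-signatures must be separated by some bounded combinatorial object. The charging of ``wasted excursions'' is likewise an aspiration, not an argument. As written, this step is circular: you are appealing to exactly the kind of structural bound the lemma is meant to supply.

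A second, smaller issue is your normalization. A minimal connected sub\emph{curve} realizing the same signature is not a tree with up to $q$ leaves; it is an arc with two endpoints, and---by minimality---each endpoint lies on the boundary of some region in $Q$. You under-use this: the two-endpoint structure is exactly what makes a direct argument work.

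The paper (which proves the generalization in its appendix, specializing to this lemma when the regions are disjoint) takes the route you list as a ``cleaner alternative worth pursuing in parallel'' and then set aside. After the minimality replacement, each curve has its two endpoints on $\partial r$ and $\partial r'$ for some pair $r,r'\in Q$, giving $O(q^2)$ buckets. Within a bucket the curves are pairwise disjoint and have endpoints on two fixed Jordan arcs, so they are linearly (cyclically) ordered. For any third region $r''$, the curves in the bucket that meet $r''$ form a contiguous block in this order---this is the planarity input, and it is a one-line observation once you have the sorted order. With $q$ regions contributing $O(q)$ delimiters, each bucket splits into $O(q)$ equivalence classes; multiplying by $O(q^2)$ buckets gives $O(q^3)$. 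Your ``fingerprint'' intuition (two endpoint regions and one further parameter) is exactly right, and the tightness construction you describe matches it; the missing step was simply to recognize that the minimal sub-curve already hands you the two endpoint regions for free.
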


%In more general terms, the following corollary follows immediately from Lemma~\ref{lem:q3-classes}.

%\begin{corollary}
%     \label{cor:q6-classes}
%     Consider a set $Q$ of $q$ regions with simple connected boundaries and a set $C$ of disjoint curves in $\R^2$, where the boundaries of any two regions from $Q$ intersect at most $O(1)$ times. Then $C$ consists of at most $O(q^6)$ equivalence classes with respect to $Q$.
%\end{corollary}

To readers familiar with the notion of ``VC dimension'' or ``shatter dimension'' of set systems~\cite{MatousekBook},
the above lemma is equivalent to the statement that
the set system $(Q,\calR)$ with  $\calR=\{\{s\in Q: s\cap c\neq \emptyset\}: c\in C\}$ has shatter dimension at most 3.
% (The bound is tight.)
We stress that what makes the lemma interesting is that the curves in $C$ do not need to have constant complexity.  On the other hand, disjointness of the curves in $C$ is crucial (otherwise, it is easy to find counterexamples with exponentially many
equivalence classes).

In the dynamic connectivity problem, the inserted segments are not disjoint. However, the arrangement of $Q$ can be broken down into $O(q^2)$ non-intersecting sub-segments.
%(assuming the boundaries of each pair in $Q$ intersect $O(1)$ times).  
This implies an $O(q^6)$ bound on the number of equivalence classes.

Below, we sketch how equivalence classes can be used to obtain an offline dynamic connectivity data structure for axis-aligned segments.

\subparagraph*{Preprocessing.} At the beginning of a phase, the first task is to find the set of equivalence classes among components of $S$ with respect to $Q$.  This step is nontrivial, but Afshani and Chan showed how to compute these classes, or more precisely, the equivalences classes with respect to the $O(q^2)$ non-intersecting sub-segments, in $\OO(n)$ time in the case of axis-aligned segments  (we will say more about this later).

Throughout the phase, we maintain a {\em proxy graph} $H$, such that the connected components of the geometric objects roughly correspond to the connected components of $H$. The graph $H$ is defined as follows:

\begin{itemize}
    \item For each equivalence class $L$, there is a \emph{class vertex} corresponding to $L$.
    \item For each inserted segment $s\in Q$, there is an \emph{insertion vertex} corresponding to $s$.
    \item The edges of $H$ indicate which pairs of objects intersect.
\end{itemize}

The graph $H$ has $O(q^6)$ vertices and $O(q^7)$ edges, and can be stored in a dynamic graph connectivity structure supporting edge updates in $\OO(1)$ time~\cite{HolmLT01,HuangHKPT23}.

\subparagraph*{Offline updates.} Updates of objects are done only to $Q$ and not to $S$.  Whenever an object $s$ is inserted to $Q$, we create a vertex for $s$ in $H$. For each class $L$, we decide whether $s$ intersects all components of $L$ by picking an arbitrary component from $L$, and then query an orthogonal intersection searching structure~\cite{EdelsbrunnerM81}.
Each insertion/deletion in $Q$ causes
$\OO(q^6)$ edge updates in $H$.

The preprocessing step takes $\tildeO(n)$ time, while each update takes $\tildeO(q^6)$ time. Choosing $q=n^{1/7}$, we achieve $\tildeO(n^{6/7})$ amortized update time. 
%With some extra work, Afshani and Chan showed that deletions can be handled within the same (amortized) time bound.

\subparagraph*{Query.} Given two query objects $u,v$, there are now two cases to consider.

\begin{itemize}
    \item If both $u,v\in S$, then let $c_u,c_v$ be the components containing $u,v$. If $c_u=c_v$, then $u$ and $v$ are connected. If $c_u\neq c_v$ and either $c_u$ or $c_v$ belongs to an equivalence class that corresponds to an isolated vertex in $H$, then $u$ and $v$ are not connected.
    \item Otherwise, $u$ and $v$ are connected if and only if their corresponding vertices in $H$ are connected (if $u\in S$, then its corresponding vertex is the class containing $c_u$).
\end{itemize}

Thus, queries between two objects can be handled in $O(1)$ time.  We can also handle global connectivity queries: the overall number of connected components is equal to the number of components in $H$ (which can be maintained by a dynamic graph connectivity structure~\cite{PatrascuT07}) plus the number of components in isolated vertices of $H$ (which is straightforward to maintain).

\section{A New Version of the Combinatorial Lemma}
\label{sec:new-comb-lemma}

It turns out that the $O(q^6)$ combinatorial bound on the number of equivalence classes, when $Q$ may be intersecting,  can be improved to $O(q^3)$.  There is no need to subdivide the arrangement of $Q$ into $O(q^2)$ non-intersecting parts before applying Lemma~\ref{lem:q3-classes}. Rather, we can modify Afshani and Chan's proof directly.

\begin{lemma}[New combinatorial lemma]
    \label{lem:q3-classes-objects}
    Consider a set $Q$ of $q$ regions with simple connected boundaries and a set $C$ of disjoint curves in $\R^2$, where the boundaries of any two regions from $Q$ intersect at most $O(1)$ times. Then $C$ consists of at most $O(q^3)$ equivalence classes with respect to $Q$.  (This bound is tight.)
\end{lemma}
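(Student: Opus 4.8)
The plan is to revisit Afshani and Chan's proof of Lemma~\ref{lem:q3-classes} and carry out the argument directly for (possibly overlapping) regions, instead of first cutting the arrangement of $Q$ into $O(q^2)$ Jordan sub-regions (which would only give $O(q^6)$). As a first step I would normalize: discard all but one representative curve from each equivalence class, leaving a family $\Gamma$ of pairwise disjoint curves realizing pairwise distinct subsets $R_\gamma=\{s\in Q: s\cap\gamma\neq\emptyset\}$, so the goal becomes $|\Gamma|=O(q^3)$. It is also convenient to reduce to curves whose two endpoints lie outside $\bigcup Q$ (treating curve endpoints and closed curves with a standard amount of extra care).

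Following the shape of Afshani and Chan's argument, the core is to attach to each curve $\gamma$ a constant-size combinatorial \emph{signature} $\sigma(\gamma)$ --- morally, a bounded number of ``crossing events'' of $\gamma$ with the boundaries $\partial Q$, recorded at canonically chosen positions along $\gamma$ (e.g.\ the first and last regions of $Q$ entered in a fixed traversal, plus one more witnessing event) --- so that (i) $\sigma$ takes only $O(q^3)$ values, and (ii) $\sigma(\gamma)$ determines $R_\gamma$. Part (ii) is where disjointness of the \emph{curves} is essential: the claim to establish is that if $\gamma,\gamma'\in\Gamma$ are disjoint with $\sigma(\gamma)=\sigma(\gamma')$ but $s\in R_\gamma\setminus R_{\gamma'}$ for some region $s$, then $\gamma$ and $\gamma'$ are forced to cross --- because $\gamma'$, together with the $O(1)$ boundary arcs recorded in the common signature, encloses a ``pocket'' that $\gamma$ must enter to reach $s$ yet cannot reach without meeting $\gamma'$. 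I would formalize this as a Jordan-curve / topological-separation argument on the planar subdivision induced by $\gamma'$ and those $O(1)$ boundary arcs.

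The modification needed for the more general hypothesis is localized to the topological bookkeeping. Afshani and Chan use disjointness of $Q$ to ensure that a curve is inside at most one region at a time and that $\partial Q$ is a planar arrangement of $O(q)$ complexity; with overlapping regions neither holds. Here the condition that any two region boundaries cross $O(1)$ times is exactly what I would lean on: the arrangement $\calA(\partial Q)$ then has $O(q^2)$ vertices, edges, and faces, and along any curve the ``depth'' can still change only across $\partial Q$, each boundary contributing only $O(1)$ crossings to a fixed sub-arc. The crucial point is that the signature must still be indexed by \emph{individual regions} of $Q$ (and their $O(1)$ boundary pieces), not by the $\Theta(q^2)$ cells of $\calA(\partial Q)$; if so, part (i) still yields $O(q^3)$, and the separation argument in part (ii) goes through with ``boundary arc'' in place of ``boundary Jordan curve''.

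I expect the main obstacle to be precisely this last point: arguing that, despite the more complicated overlay structure, an equivalence class is still pinned down by a triple of regions rather than by some larger configuration (which would inflate the bound to $O(q^4)$ or worse). The delicate case is a curve threading through a deeply overlapping cluster of regions; one has to check that the ``first/last/witness'' events remain well-defined and sufficient, and that the pocket in part (ii) stays bounded by $O(1)$ curves even when region boundaries cross one another. Finally, tightness is inherited for free: disjoint regions (boundaries crossing $0$ times) satisfy the hypothesis, so Afshani and Chan's extremal construction for Lemma~\ref{lem:q3-classes} already realizes $\Theta(q^3)$ classes in the new setting.
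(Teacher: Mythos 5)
Your high-level plan---directly adapt Afshani and Chan's argument and lean on the $O(1)$-crossings hypothesis instead of disjointness of $Q$---is the right meta-strategy, and your pocket/Jordan-curve intuition is not wrong, but the proposal stops exactly at the point you flag as the ``main obstacle,'' and the missing piece is structural, not just technical. The paper's proof runs as follows: shrink each curve to a \emph{minimal} sub-curve hitting the same set of regions, so its two endpoints lie on $\partial r,\partial r'$ for distinct $r,r'\in Q$ (curves meeting only one region contribute only $O(q)$ classes); subdivide every boundary into $O(1)$ pieces at its intersection points with other boundaries, and group curves by the pair $(\pi,\pi')$ of boundary \emph{pieces} carrying the two endpoints, which is the $O(q^2)$ factor. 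Minimality makes the curves in one group disjoint not only from each other but also from $\pi$ and $\pi'$, so they admit a cyclic order along $\pi$. The crucial claim is then that each region $r''\in Q$, after being cut into $O(1)$ sub-regions whose boundaries avoid $\pi$ and $\pi'$, meets a union of $O(1)$ contiguous arcs of that cyclic sequence; aggregating the $O(q)$ delimiters splits the group into $O(q)$ blocks, each within a single equivalence class, for $O(q)\cdot O(q^2)=O(q^3)$ total.

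Two things in your sketch would fail as stated. First, asking a constant-size signature to \emph{determine} $R_\gamma$ is the wrong target: a class in a group is labeled by a block of the cyclic order cut out by delimiters coming from \emph{all} $q$ regions, not by a single ``witness'' region, so you should aim for the counting statement (at most $O(q)$ classes per group), which is what the contiguity claim delivers and what your undefined ``one more witnessing event'' leaves open. Second, the index must be a pair of boundary \emph{pieces}, not a pair of regions or ``first/last region entered''; the subdivision of $\partial r$ and $\partial r'$ at their crossing points is exactly where the hypothesis that two boundaries cross $O(1)$ times is used, and without it the sorting along $\pi$ and the $O(1)$-contiguous-arcs step both break when boundaries overlap. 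Once the grouping-and-sorting scaffolding is in place, your separation intuition is essentially the proof of the contiguity claim, and your observation that tightness is inherited from the disjoint-regions case is correct.
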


\begin{proof}
We modify Afshani and Chan's proof~\cite{AfshaniC09}.
    For each curve $c\in C$, we first replace $c$ with a minimal sub-curve that intersects the same set of regions as $c$.  (In other words, we are ``erasing'' unnecessary parts of the input curves, which do not change the equivalence classes; see Figure~\ref{fig:erase}.) Then the two endpoints of $c$ must lie on the boundaries of two different regions in $Q$ (except for curves that intersect only one region, but there are only $O(q)$ classes of such curves).

    \begin{figure}
        \centering
        \begin{subfigure}{.5\textwidth}
          \centering
          \includegraphics[width=.45\linewidth,page=1]{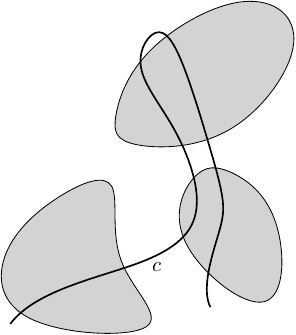}
          % \subcaption{A curve $c$ intersecting three regions of $Q$.}
          \label{fig:erase-1}
        \end{subfigure}%
        \begin{subfigure}{.5\textwidth}
          \centering
          \includegraphics[width=.45\linewidth,page=2]{figures/comb-lemma.pdf}
          % \subcaption{The minimal sub-curve that intersects the same three regions.}
          \label{fig:erase-2}
        \end{subfigure}
        \caption{The left figure shows a curve $c$ intersecting three regions in $Q$. The right figure shows the minimal sub-curve of $c$ that intersects the same regions as $c$.}
        \label{fig:erase}
    \end{figure}
    
    Fix two different regions $r,r'\in Q$.  Divide $\partial r$ and $\partial r'$ into $O(1)$ pieces at their intersection points, if they intersect.  Take a piece $\pi$ of $\partial r$ and a piece $\pi'$ of $\partial r'$.  We will bound the number of equivalence classes of the subset $C_{\pi,\pi'}$ of all curves in $C$ that have one endpoint on $\pi$ and one endpoint on $\pi'$.

    \begin{figure}
        \centering
        \includegraphics[width=.45\textwidth,page=3]{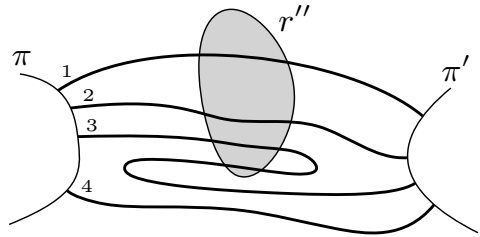}
        \caption{This figure shows four curves in $C_{\pi,\pi'}$, ordered from $1$ to $4$ in cyclic order. Any region $r''$ disjoint from $\pi$ and $\pi'$ must intersect a contiguous (cyclic) sequence of the curves.}
        \label{fig:curve-order}
    \end{figure}
    
    The curves in $C_{\pi,\pi'}$ do not intersect each other, and they also do not intersect $\pi$ and $\pi'$, because of minimality (otherwise we could have erased more).  Thus, we can sort all these curves along $\pi$, or cyclically equivalently, along $\pi'$.  
    For any region $r''$ whose boundary does not intersect $\pi$ and $\pi'$,
    the curves in $C_{\pi,\pi'}$ that intersect $r''$ are contiguous in the sorted cyclic sequence (see Figure~\ref{fig:curve-order}).
    More generally, for an arbitrary region $r''\in Q$, we can divide into $O(1)$ sub-regions with boundaries not intersecting $\pi$ and $\pi'$.
    Thus, for any region $r''\in Q$, the curves in $C_{\pi,\pi'}$ that intersect $r''$ form a union of $O(1)$
    contiguous subsequences in the sorted sequence.
    Taking all the $O(q)$ delimiters together, we get a partition of $C_{\pi,\pi'}$ into $O(q)$ blocks such that two curves in the
    same block are in the same equivalence class.  Hence, the number of equivalence classes in $C_{\pi,\pi'}$ is $O(q)$.
    
    As there are $O(q^2)$ choices of $(\pi,\pi')$, the total number of equivalence classes is at most $O(q^3)$.
\end{proof}

Note that the lemma holds also when $C$ is a set of disjoint connected regions (since regions may be ``simulated'' by strings~\cite{Lee17}).

\section{Solving the Key Subproblem: Computing Equivalence Classes}
\label{sec:key}

As illustrated by Afshani and Chan's method, dynamic connectivity---at least in the offline settings---reduces to the following key subproblem.

\begin{subproblem}[Equivalence class computation]
    \label{prb:compute-eq-cls}
    Given a set $S$ of $n$ geometric objects and a set $Q$ of $q$ geometric objects in $\R^2$, compute the equivalence classes among connected components of $S$ with respect to $Q$ faster than $O(qn)$ time (ideally, in $\OO(n)$ time when $q$ is not too large).
\end{subproblem}

For axis-aligned line segments, Afshani and Chan solved Subproblem~\ref{prb:compute-eq-cls} roughly as follows.
We first form a grid by drawing the grid lines at the endpoints of the segments of $Q$. The grid lines then divide the segments of $Q$ into $O(q^2)$ non-intersecting sub-segments. We will actually
compute equivalence classes with respect to these $O(q^2)$ sub-segments instead of $Q$ (this is sufficient for the application to dynamic connectivity).
The sub-segments within each row (or column) are linearly ordered. Since any axis-aligned segment $u$ lies in exactly one row or column, the set of sub-segments intersecting $u$ can be represented as an integer interval, which can be computed in $\tildeO(1)$ time using binary search. For each connected component $c$ of $S$, the class that $c$ belongs to can be represented as the union of the intervals for all segments of $c$. Thus, the representations of all components can be found in $\tildeO(n)$ time. Crucially, two components have the same representation if and only if they belong to the same equivalence class. With some careful analysis, Afshani and Chan showed that the representations can be sorted in $\tildeO(n)$ time, allowing us to then compute all equivalence classes by a linear scan in $\tildeO(n)$ total time.
 
However, as one can see, the above ad hoc, grid-based approach does not work when the
objects are not axis-aligned (it fails even for line segments with three possible slopes).

In this section, we propose a more general approach to solving Subproblem~\ref{prb:compute-eq-cls} that can handle most types of objects in 2D, including disks and arbitrary line segments.
Our approach does not require subdividing $Q$ into nonintersecting pieces first, so we can exploit our new improved combinatorial lemma to get better results even for axis-aligned segments.
Furthermore, the \emph{incremental} manner in which we compute the equivalence classes will help in handling online updates more efficiently in our dynamic connectivity data structures.

\subsection{New Idea: Repeated Class Splitting}

Our central idea is simple: we will compute the equivalence classes incrementally by inserting objects of $Q$ one at a time. (This is fundamentally different from Afshani and Chan's approach, which scans through elements of $S$ instead.)  Initially, no object has been inserted, so all components of $S$ belong to the same class. Whenever an object $s$ is inserted, we examine each class $L$ that existed before the insertion. If $s$ intersects some but not all components in $L$, then $L$ must be split into two ``child classes''---one whose components intersect $s$, and one whose components don't. This way, we can maintain the equivalence classes by repeatedly splitting existing classes as objects are being inserted.  (A similar idea was used by Chan~\cite{Chan21} to solve a completely different problem.)

\subparagraph*{Class splitting data structure.} To split a class into two child classes, we design a {\em class splitting} data structure $\calD(L)$, such that for any object $s$ that splits $L$ into child classes $L_1,L_2$, we can efficiently produce data structures $\calD(L_1)$ and $\calD(L_2)$. An ostensible obstacle is that when forming $\calD(L_1)$ and $\calD(L_2)$, it seems that we can't avoid spending time at least $O(1)$ time per object in $L$. If so, computing the classes may take $\Omega(n)$ time per insertion.

To get around this obstacle, we use an amortization trick, namely, a reverse version of the standard ``weighted union heuristic'':
instead of building both $\calD(L_1)$ and $\calD(L_2)$ from scratch, we do so only for the child class with smaller total size. Specifically, if $L_2$ has smaller total size than $L_1$, then we build $\calD(L_2)$ from scratch, while $\calD(L_1)$ is obtained by deleting everything associated with $L_2$ from $\calD(L)$. Intuitively, we ``displace'' $L_2$ from $L$ without necessarily examining objects from $L_1$. 
We claim that the sum of the number of displaced objects over all updates in the phase
is $O(n\log n)$.  One proof is via a potential function argument: defining $\Phi=\sum_{\mbox{\scriptsize\rm  class $L$}} |L|\log |L|$ (where $|L| := \sum_{\mbox{\scriptsize component\ }c\in L}|c|$), it is not difficult to
see that a split with $y$ displacements decreases $\Phi$ by at least $\Omega(y)$ (because of the inequality
$(x+y)\log(x+y) - x\log x - y\log y\ge y\log (x/y+1)\ge y$ when $x\ge y$), and the claim follows.
(An alternative, simple proof is to argue that each object can be displaced at most $O(\log n)$ times, since each time this happens, the size of the class containing the object is reduced by at least a factor of 2.)

\subparagraph*{Reduction to component (non-)intersection reporting.} The key part of the class splitting operation is building a data structure $\calD(L)$ that can report the smaller child class $L'$ in time linear in the total size of $L'$, but sublinear in the total size of $L$. It suffices for the data structure to handle the following two types of queries:

\begin{enumerate}
    \item Report all connected components in $L$ that intersect $s$.
    \item Report all connected components in $L$ that {\em do not} intersect $s$.
\end{enumerate}

By running the two query-answering algorithms concurrently and stopping when either of them terminates, we can report $L'$ within the desired time bound.  We need the data structure to support deletions of components, so that we can subsequently delete each component in $L'$ from~$L$.

%\subsection{Component Reporting Data Structures}

The component intersection reporting problem we face can be viewed as a \emph{colored} intersection reporting problem, where the objective is to preprocess a set of colored objects, so that one can efficiently report all colors intersected by a query object. 
Similarly, we need to solve the corresponding
colored non-intersection reporting problem.
(See \cite{GuptaJRS18} for background on colored range searching.)

In the next three subsections, we apply known geometric data structuring techniques (some of which are 
inspired by colored range searching), to solve this component (non-)intersection reporting problem
for axis-aligned line segments, disks, and arbitrary line segments.

\subsection{Axis-Aligned Line Segments}

\begin{lemma}
    \label{lem:seg0-comp-search}
    Given a set of connected components formed by $n$ axis-aligned segments in $\R^2$,
    there is a component (non-)intersection reporting data structure with $\OO(n)$ preprocessing time and $\tildeO(1+k)$ query time, where $k$ is the number of components reported.  Furthermore, we can support deletion of any component in $\OO(1)$ amortized time, and insertion of any component $c$ in $\OO(|c|)$ amortized time.
\end{lemma}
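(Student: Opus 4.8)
\emph{Proof plan.}
The plan is to recast both queries as colored orthogonal range-searching problems and assemble them from textbook structures. First I would apply a symbolic perturbation (consistent with the rest of the data structure) putting $S\cup Q$ in general position, so that no two axis-parallel segments -- data or query -- share a coordinate; then a vertical query segment $s=\{x_0\}\times(y_1,y_2)$ can only be hit by horizontal data segments, and an analogous structure with the roles of the axes swapped handles horizontal queries. Color every component. A horizontal data segment $[a,b]\times\{y\}$ of color $c$ is hit by $s$ iff $a<x_0<b$ and $y_1<y<y_2$. For \emph{intersection} reporting I would build an interval tree on the $x$-projections $[a,b]$: the projections stabbed by $x_0$ are the disjoint union of $O(\log n)$ canonical subsets on the root-to-leaf path of $x_0$, and at each tree node I store the $(y,\mathrm{color})$ pairs of its canonical subset in a colored $1$-dimensional range-reporting structure (linear space and $\OO(1+k)$ query time, via the standard reduction to $3$-sided planar range reporting). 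A query probes these $O(\log n)$ structures with $(y_1,y_2)$ and deduplicates the colors found via a timestamped marker array; this gives $\OO(n)$ preprocessing and $\OO(1+k)$ query time.

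\emph{Non-intersection reporting} is the heart of the lemma. For generic $x_0$ let $Y_c(x_0)$ be the multiset of heights of horizontal segments of $c$ stabbed by $x_0$; then $c$ is to be reported iff $Y_c(x_0)\cap(y_1,y_2)=\emptyset$. I would split the reported colors into two disjoint groups. \emph{Group (i):} $Y_c(x_0)=\emptyset$, i.e.\ $x_0$ lies outside the union $P_c$ of the $x$-projections of $c$'s horizontal segments; equivalently $x_0$ lies in one of the $O(|c|)$ maximal open gaps of $P_c$ (an all-vertical component having the single gap $(-\infty,\infty)$). Since these gaps form $O(n)$ pairwise-disjoint intervals overall and at most one gap of a fixed color contains $x_0$, group (i) is produced by a single \emph{uncolored} quadrant range-reporting query on the gap endpoints. \emph{Group (ii):} $Y_c(x_0)\ne\emptyset$ but misses $(y_1,y_2)$. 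Here I would sweep a vertical line in $x$ (events at the left/right endpoints of data segments) while maintaining, \emph{partially persistently}, the multiset $\{(y,\mathrm{color})\}$ of currently-stabbed horizontal segments; the version at $x=x_0$ is exactly $\{(y,\mathrm{color}):y\in Y_c(x_0)\}$, so all that remains is, within one version, to report the colors with no element in the height range $(y_1,y_2)$.

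This last task -- colored $1$-dimensional \emph{complement} range reporting -- is where I expect the real work, but it decomposes cleanly into three output-sensitive pieces: colors whose maximum active height is below $y_1$ (a prefix once colors are ordered by maximum height), colors whose minimum active height is above $y_2$ (a symmetric suffix), and colors having an internal gap between two consecutive active heights that straddles $[y_1,y_2]$ (a quadrant query on the $O(n)$ consecutive-height gap points $(h_i,h_{i+1})$, uncolored since these gaps are disjoint within a color). Each piece supports point insertion/deletion -- and the attendant updates to a color's extreme heights and to its $O(1)$ affected gaps -- in $\OO(1)$ time, so the $x$-sweep can be made partially persistent with polylogarithmic overhead and a version queried in $\OO(1+k)$ time. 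Combining groups (i) and (ii) and deduplicating gives $\OO(1+k)$ non-intersection queries; running the intersection and non-intersection algorithms concurrently and stopping when either finishes then reports the smaller child class within the bound needed in Section~\ref{sec:key}.

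\emph{Updates.} I would handle a component deletion lazily by marking it, rebuild both the interval trees and the persistent $x$-sweep globally once a constant fraction of the segments are marked, and physically purge a marked component the first time a query rediscovers it (so that queries stay output-sensitive). By the repeated-splitting accounting of Section~\ref{sec:key}, the total size of all displaced components in a phase is $O(n\log n)$, which absorbs the rebuild/purge cost and yields $\OO(1)$ amortized per deletion and $\OO(|c|)$ amortized per insertion. The two points I expect to need care on are (a) making lazy deletion coexist with output-sensitive queries through this rebuilding scheme, and (b) the colored-complement structure above: a tempting alternative -- an interval tree whose canonical subsets each hold a colored-complement structure -- seems to fail because ``$c$ is hit in none of the $O(\log n)$ canonical subsets'' is a conjunction that I do not see how to report output-sensitively, which is why the persistent-sweep route looks necessary.
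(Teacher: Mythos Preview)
Your static reductions are sound, but the update scheme has a real gap. A partially persistent $x$-sweep is read-only in old versions, so ``physically purge a marked component the first time a query rediscovers it'' cannot be carried out inside that structure; you can purge from the interval tree, but not from any past version of the sweep. Without purging, ghosts accumulate and kill output-sensitivity: after a split that displaces $n/2$ components, a subsequent query whose live non-intersecting side is \emph{empty} may still enumerate all $n/2$ ghosts before terminating, so neither the intersection nor the non-intersection branch finishes in $\OO(1)$ time. Periodic global rebuilds bound the \emph{number} of ghosts by a constant fraction of the structure, but not the per-query cost of touching them. The same issue bites insertions: a freshly inserted component is absent from the persistent sweep until the next rebuild, so non-intersection queries silently miss it.

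The paper sidesteps persistence entirely with one observation that also happens to be the ``right'' dynamization of your group~(ii). For each component $c$, take the vertical decomposition $T_c$ of its horizontal segments; for a vertical query $s$ with lower endpoint $p$, the \emph{only} segment of $c$ that matters is the upper wall $u$ of the unique cell of $T_c$ containing $p$, and $s$ hits $c$ iff it hits $u$. This turns both queries into: (primary) orthogonal rectangle stabbing on the $O(n)$ cells $\bigcup_c T_c$ to get polylog canonical subsets of cells containing $p$; (secondary) among their upper walls $u$, report those with height $\le y_2$ (intersection) or $> y_2$ (non-intersection), a one-dimensional threshold. Both levels are standard dynamic orthogonal structures, so deleting a component is just deleting its $O(|c|)$ cells, charged back to preprocessing/insertion. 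Note that your ``consecutive-height gap that straddles $[y_1,y_2]$'' at a fixed $x_0$ is exactly a cell of $T_c$; materializing those cells once (there are $O(|c|)$ per component, not $O(|c|)$ per sweep event) is precisely what replaces the persistent sweep and restores deletability.
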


\begin{proof}
    In the following, we assume that the inserted segment is vertical; the other case can be handled symmetrically. For each component $c$, we compute the vertical decomposition $T_c$ of the horizontal segments of $c$. There are $O(n)$ rectangular cells over all components. We make the following observation.
    
    \begin{observation}
        \label{obs:seg0-cells}
        Let $s$ be a vertical segment and $p$ be its lower endpoint. For any component $c$, let $\Delta$ be the cell in $T_c$ that contains $p$, and let $u$ be the horizontal segment in $c$ that bounds $\Delta$ from above. Then $s$ intersects a segment in $c$ if and only if $s$ intersects $u$.
    \end{observation}
    
    By Observation~\ref{obs:seg0-cells}, components that intersect $s$ can be reported as follows:
    
    \begin{enumerate}
        \item Among the cells of $T_c$ for all components $c$, find all cells that contain $p$.
        \item Report, among the upper-bounding segments of these cells, the ones that intersect $s$.
    \end{enumerate}
    
    Components that do not intersect $s$ can be reported similarly, except in the second step we instead report the segments that do not intersect $s$. We support the above reporting query with a two-level data structure $\calD$.
    
    \begin{itemize}
        \item The primary structure is an orthogonal rectangle stabbing structure with $\tildeO(1)$ query time and $\tildeO(1)$ update time (e.g., \cite{EdelsbrunnerM81}, or by reduction to orthogonal range searching in~$\R^4$).
        \item For each canonical subset $B$ in the primary structure, we associate each cell $\Delta\in B$ with the segment that bounds $\Delta$ from above. We keep two auxiliary structures on the associated line segments. Given a vertical segment $s$, one auxiliary structure reports all segments intersecting $s$, while the other reports all segments not intersecting $s$. Both structures have $\tildeO(1+k)$ query time (where $k$ is the output size) and $\tildeO(1)$ update time (as these subproblems also reduce to orthogonal range searching).
    \end{itemize}

    %The case where the input segment $s$ is horizontal can be handled symmetrically. 
    The preprocessing and query time bounds are clearly satisfied. Insertion/deletion of a component $c$ requires inserting/deleting all cells of $T_c$ and  takes $\OO(|c|)$ time; we can charge the cost of deletion to preprocessing or insertion.
\end{proof}

\begin{corollary}
    \label{cor:seg0-split-time}
    For $n$ axis-aligned line segments in $\R^2$, we can solve Subproblem~\ref{prb:compute-eq-cls} in $\OO(n+q^4)$ time. Furthermore, each insertion to $Q$ takes  $\tildeO(q^3 + n/q)$ amortized time. We can also support deletion of a component in $\OO(1)$ amortized time and  insertion of a component $c$ in $\OO(|c|)$ amortized time, if we are given $c$'s class.
\end{corollary}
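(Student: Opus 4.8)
The plan is to instantiate the repeated class splitting framework of Section~\ref{sec:key} using Lemma~\ref{lem:seg0-comp-search} as the class splitting structure $\calD(\cdot)$, and to bound the running time with the help of Lemma~\ref{lem:q3-classes-objects}. As preprocessing I would compute the connected components of $S$ (a standard $\OO(n)$-time plane sweep with union--find for axis-aligned segments), place all of them into a single initial class $L_0$, and build $\calD(L_0)$ in $\OO(n)$ time via Lemma~\ref{lem:seg0-comp-search}. I then insert the $q$ segments of $Q$ one at a time, maintaining at all times the current partition of the components of $S$ into equivalence classes, each equipped with its own structure $\calD(L)$. Since axis-aligned segments satisfy the hypothesis of Lemma~\ref{lem:q3-classes-objects} (any two have simple boundaries meeting $O(1)$ times) and the components of $S$ are disjoint ``strings'', there are never more than $O(q^3)$ classes; as the class count only increases, the total number of splits over the whole phase is $O(q^3)$ as well.

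To process an insertion of $s$, I would scan every current class $L$ and run the two queries of $\calD(L)$---report the components of $L$ meeting $s$, and report those not meeting $s$---concurrently, halting the moment either query finishes. If it halts with that side empty, $L$ is not split and only $\OO(1)$ time was spent; otherwise it has produced, in $\OO(1+k')$ time with $k'$ the number of reported components, the smaller child class $L'$. We rebuild $\calD(L')$ from scratch in $\OO(|L'|)$ time, and turn $\calD(L)$ into the structure for the larger child by deleting $L'$'s $k'$ components, which costs $\OO(k')\le\OO(|L'|)$ amortized time by Lemma~\ref{lem:seg0-comp-search}. Thus one insertion costs $\OO(q^3)$ for the scan over classes plus $\OO(\sum_{\text{splits}}|L'|)$ summed over the splits it triggers.

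The remaining ingredient is the amortized analysis already sketched in Section~\ref{sec:key}: because we always displace and rebuild the \emph{smaller} child while stealing the larger one, the total size $\sum_{\text{all splits}}|L'|$ over the phase is $O(n\log n)$---either via the potential $\Phi=\sum_L |L|\log|L|$, or because each of the $n$ segments can be displaced only $O(\log n)$ times (its class at least halves each time). Hence the total rebuild-and-delete work is $\OO(n)$, while the per-insertion scans over $O(q^3)$ classes sum to $\OO(q^4)$ across the $q$ insertions; adding the $\OO(n)$ preprocessing gives total time $\OO(n+q^4)$, and amortized cost $\OO(q^3+n/q)$ per insertion to $Q$. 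The final two claims are then immediate: deleting a component amounts to deleting it from the $\calD(L)$ of its (given) class in $\OO(1)$ amortized time, and inserting a component $c$ whose class $L$ is given amounts to inserting it into $\calD(L)$ in $\OO(|c|)$ amortized time, both by Lemma~\ref{lem:seg0-comp-search}.

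The main obstacle is exactly making the repeated splitting affordable: a naive implementation spends $\Omega(|L|)$ time per split and hence $\Omega(n)$ per insertion. The fix is the ``reverse weighted-union'' heuristic above---rebuild the small side, obtain the large side by deletion---combined with the trick of running the intersection and non-intersection queries concurrently, so that the smaller child is detected in output-sensitive time \emph{without} knowing in advance which side is smaller. A minor mismatch, that $\calD(L)$'s query time counts reported components whereas a rebuild costs time proportional to the number of segments in $L'$, is harmless since the former never exceeds the latter.
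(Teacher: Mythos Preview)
Your proposal is correct and follows essentially the same approach as the paper: instantiate the repeated-splitting framework with the structure of Lemma~\ref{lem:seg0-comp-search}, bound the class count by $O(q^3)$ via Lemma~\ref{lem:q3-classes-objects}, and amortize displacements with the potential~$\Phi$. The paper's proof adds only one detail you leave implicit: inserting a component $c$ also raises $\Phi$ by $\OO(|c|)$, and it is this potential increase (not just Lemma~\ref{lem:seg0-comp-search}'s data-structure cost) that makes the $\OO(|c|)$ amortized bound for component insertion hold against that component's future displacements.
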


\begin{proof}
    We store each equivalence class in the data structure from the preceding lemma.
    For $i=1,\ldots,q$, let $L_{i,1},\ldots,L_{i,O(q^3)}$ be the existing classes before the $i$-th insertion. Let $n_{i,1},\ldots,n_{i,O(q^3)}$ be the sizes of the classes. Whenever a class $L_{i,j}$ is split, let $m_{i,j}$ be the total size of the displaced child class. As noted, the total number of displacements is $\sum_{i,j} m_{i,j} = \tildeO(n)$, when there are no insertions of components; in general, it is $\OO(n+\Sigma)$, where $\Sigma$ is the sum of the sizes of the components inserted (because insertion of a component $c$ increases the potential $\Phi$ by $\OO(|c|)$). Thus, the total running time for class splitting operations over all insertions to $Q$ is at most
    \begin{equation*}
        \tildeO\left( \sum_{i=1}^{q} \sum_{j=1}^{O(q^3)} \left(1 + m_{i,j}\right) \right)
        \:=\: \tildeO\left(q^4 + n + \Sigma\right).
    \end{equation*}
    This amortizes to $\tildeO(q^3 + n/q)$ time per insertion to $Q$, with the $\Sigma$ term charged to insertions of components.
\end{proof}

\subsection{Disks}

\begin{lemma}
    \label{lem:disk-comp-search}
    Given a set of connected components formed by $n$ disks in $\R^2$, there is a component (non-)intersection reporting data structure with $\OOO(n)$ preprocessing time and $\OOO(n^{4/5}+k)$ query time, where $k$ is the number of components reported. Furthermore, we can support deletion of a component in $\OOO(1)$ amortized time and insertion of a component $c$ in $\OOO(|c|)$ amortized time.
\end{lemma}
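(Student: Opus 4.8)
The plan is to strip away the geometry of disks by linearizing the disk--disk intersection predicate, reducing both reporting problems to \emph{colored} range searching in a space of constant dimension, and then to combine a component-size threshold with standard partition-tree/cutting machinery and a lazy-deletion dynamization.

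First I would linearize. Mapping a disk $D=(c_D,r_D)$ to $\phi(D)=(\|c_D\|^2-r_D^2,\,c_{D,x},\,c_{D,y},\,r_D)\in\R^4$, the condition $\|p-c_D\|^2\le(\rho+r_D)^2$ for a query disk $s=(p,\rho)$ to intersect $D$ becomes affine in $\phi(D)$ with coefficients depending only on $s$; since moreover the points $\phi(D)$ lie on the graph of a quadric, this is equivalently a bounded-degree semialgebraic condition on $(c_{D,x},c_{D,y},r_D)\in\R^3$. Coloring each point $\phi(D)$ by the connected component of $D$, reporting the components that intersect $s$ is then exactly colored (semialgebraic) range reporting---report the distinct colors having at least one point in the query region---while reporting the components that do \emph{not} intersect $s$ is the complementary ``for-all'' version: report the distinct colors \emph{all} of whose points lie outside the query region. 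For the latter it is equivalent to ask, for each color, whether the point of that color extreme in the query's normal direction lies outside the region, which is a colored extreme-point/linear-programming-type query over the union of the color convex hulls (of total complexity $O(n)$).

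For the structure itself I would fix a threshold $t$. A component with at least $t$ disks is one of at most $n/t$ such components; for each I store its additively-weighted (Apollonius) nearest-disk diagram---with which ``does $s$ intersect this component'' is an $\OOO(1)$ query---together with a companion ``farthest''-type diagram answering ``does this component contain a disk disjoint from $s$''. These use $O(|c|)$ space each, $\OOO(n)$ total, and contribute $\OOO(n/t)$ to every query. The components with fewer than $t$ disks are fed into a colored partition-tree/cutting-based structure on the $O(n)$ lifted points, with each cell annotated by the list of distinct colors it meets (of total size $O(n)$, as a cell of $j$ points meets at most $j$ colors): cells fully inside the query region emit their precomputed color lists, cells straddling the boundary are recursed into, and the standard colored-reporting bookkeeping suppresses duplicate color reports, so the running time is output-sensitive in $k$ rather than in the number of reported points; a symmetric structure built on the color hulls handles the for-all version. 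Balancing the $\OOO(n/t)$ term against the cost of the colored small-component structures yields query time $\OOO(n^{4/5}+k)$ with $\OOO(n)$ preprocessing and linear space. For updates, deletion of a component is done lazily---mark it deleted, skip it when reporting---with the whole structure rebuilt once a constant fraction of the disks have been deleted; charging the rebuild against the preceding preprocessing or insertions (exactly as in the axis-aligned case) gives $\OOO(1)$ amortized per deleted component, and insertion of a component $c$ inserts its $|c|$ lifted points into dynamic versions of the partition structures (logarithmic method / periodic rebuilding, $\OOO(1)$ amortized per point) plus, when $|c|\ge t$, its Apollonius structures in $\OOO(|c|)$ time, for $\OOO(|c|)$ amortized in total.

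The main obstacle is the colored reporting step in constant dimension: achieving output sensitivity (time $+k$, not $+$ the number of data points in the query region) while keeping space linear---that is, correctly charging and de-duplicating color reports across the cells and levels of the partition tree---and doing all of this for the for-all version demanded by non-intersection queries, as well as for the colored extreme-point query. Once that is in place, the size threshold (and the balance producing the exponent $4/5$) together with the lazy-deletion dynamization are routine.
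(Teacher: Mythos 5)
Your proposal takes a genuinely different route from the paper, but it contains an unresolved gap at its core, which you in fact flag yourself as ``the main obstacle''---and it is not routine.

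The paper does not solve a colored range reporting problem at all. Its key step is Observation~\ref{obs:disk-cells}: for each component $c$, build the additively weighted Voronoi diagram of $c$'s disks; for a query disk $s$ with center $p$, the single disk $u$ whose weighted Voronoi cell contains $p$ is a \emph{representative} of $c$ in the sense that $s$ intersects some disk of $c$ if and only if $s$ intersects $u$. Thus every component contributes exactly one candidate disk per query, and the problem collapses to an \emph{uncolored} two-level query: (i) stabbing among the $O(n)$ Voronoi cells by $p$ (linearized to simplex stabbing in $\R^5$, whence $O^*(n^{4/5})$ canonical subsets via \Matousek{}'s partition tree---this is where the exponent $4/5$ actually comes from), and (ii) intersection / non-intersection reporting among the representative disks on each canonical subset via additively weighted nearest/farthest neighbor structures. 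Because each component produces exactly one representative, the output size $k$ automatically matches the number of reported disks and no de-duplication is needed.

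Your proposal instead lifts disks to $\R^4$, colors the lifted points by component, and asks for colored semialgebraic range reporting with $O^*(n^{\alpha}+k)$ query time and $O^*(n)$ space. This subproblem is not standard and is not known to be solvable with those parameters: the $O(n)$ conflict-list bound you cite is per level of the recursion, a color that occupies many ``fully inside'' cells is emitted many times, and the usual tricks for suppressing duplicate colors (persistence, hereditary structures, shallow cuttings) do not carry over to semialgebraic ranges in $\R^4$ with linear space. The for-all (non-intersection) variant is worse: you appeal to per-color convex hulls ``of total complexity $O(n)$,'' but in $\R^4$ those hulls can have total complexity $\Theta(n^2)$. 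Finally, the threshold $t$ does not produce the balance you claim: the colored structure's query cost depends on the total number of lifted points (which can remain $\Theta(n)$ even when all components are small), not on $t$, so there is nothing for $O^*(n/t)$ to balance against, and the ``$4/5$'' is not derivable from your setup. The missing idea is precisely the weighted-Voronoi representative trick, which eliminates colors from the problem before the range-searching machinery is invoked.
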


The proof follows the same argument as for axis-aligned segments, except we use different data structures than orthogonal range searching for the primary and auxiliary structures.

\begin{proof}
    The component (non-)intersection searching structure for disks is similar to the one for axis-aligned segments. However, for each component $c$, instead of using vertical decomposition, we compute an additively weighted Voronoi diagram~\cite{LeeD81} with the disk centers as the sites and the radii as the weights.  Let $V_c$ be a decomposition of the diagram into cells of constant complexity, e.g., by radially decomposing each Voronoi cell with its site as the origin.  Then $V_c$ satisfies the same key property as the vertical decomposition in the case of axis-aligned segments.
    
    \begin{observation}
        \label{obs:disk-cells}
        Let $s$ be a disk and $p$ be its center. For any component $c$, let $\Delta$ be the cell in $V_c$ that contains $p$, and let $u$ be the disk associated with $\Delta$. Then $s$ intersects a disk in $c$ if and only if $s$ intersects $u$.
    \end{observation}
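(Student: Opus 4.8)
The plan is to recognize the additively weighted Voronoi diagram as exactly the right structure here, in the same way that the vertical decomposition was the right structure for Observation~\ref{obs:seg0-cells}, after which the claim collapses to a one-line monotonicity argument. Write $s = D(p,\sigma)$, and for a disk $d = D(q,\rho)$ of $c$ define the additively weighted distance $\delta(p,d) := \|p - q\| - \rho$. The only geometric fact I need is that two disks $D(p,\sigma)$ and $D(q,\rho)$ intersect precisely when $\|p - q\| \le \sigma + \rho$, i.e.\ precisely when $\delta(p,d) \le \sigma$. Consequently, $s$ intersects \emph{some} disk of $c$ if and only if $\min_{d \in c}\delta(p,d) \le \sigma$.

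Next I would argue that the disk $u$ associated with the cell $\Delta \in V_c$ containing $p$ is a minimizer of $\delta(p,\cdot)$ over the disks of $c$. By construction $\Delta$ lies inside the additively weighted Voronoi cell of $u$'s site --- the radial subdivision only refines each Voronoi cell and never changes which site owns a given point --- and that Voronoi cell is by definition $\{x : \delta(x,u) \le \delta(x,d)\ \forall\, d \in c\}$. Hence $\delta(p,u) = \min_{d\in c}\delta(p,d)$. If $p$ happens to lie on an edge or vertex of the diagram, any of the tied sites is a legitimate choice of $u$ and is also a minimizer, so this causes no trouble; likewise, the fact that a disk contained in another disk contributes an empty Voronoi cell is irrelevant, since $u$ is still well defined as the owner of $p$'s cell.

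Combining the two statements finishes the proof. If $s$ intersects some disk $d$ of $c$, then $\delta(p,u) \le \delta(p,d) \le \sigma$, so $s$ intersects $u$; the reverse implication is immediate because $u$ is itself a disk of $c$. I do not anticipate any real obstacle: the argument is a direct transcription of the axis-aligned case, the content lying entirely in identifying the correct "distance". The only subtlety worth a sentence is that $\delta(p,d)$ may be negative (exactly when $p$ lies inside $d$), which the argument accommodates without change, and once the observation is in hand the rest of the proof of Lemma~\ref{lem:disk-comp-search} mirrors the two-level structure of Lemma~\ref{lem:seg0-comp-search}, with point location in $\bigcup_c V_c$ replacing orthogonal rectangle stabbing and disk (non-)intersection searching replacing its axis-aligned counterpart.
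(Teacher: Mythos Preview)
Your argument is correct and is exactly the intended justification: the additively weighted Voronoi cell owning $p$ is by definition the site minimizing $\|p-q\|-\rho$, and two disks intersect iff this quantity is at most $\sigma$. The paper itself states the observation without proof, treating it as immediate from the definition of the additively weighted Voronoi diagram, so you have simply spelled out the one-line reasoning the authors left implicit.
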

    
    By Observation~\ref{obs:disk-cells}, components that intersect $s$ can be reported as follows:
    
    \begin{enumerate}
        \item Among the cells of $V_c$ for all components $c$, find all cells that contain $p$.
        \item Report, among the disks associated with these cells, the ones that intersect $s$.
    \end{enumerate}
    
    Components that do not intersect $s$ can be reported similarly, except in the second step we instead report the disks that do not intersect $s$. Again, we support the above reporting query with a two-level data structure $\calD$.
    
    \begin{itemize}
        \item For the primary structure, observe that 
        edges of additively weighted Voronoi diagrams in $\R^2$ are hyperbolas, which have equations of the form $c_1x^2+c_2y^2+c_3 xy + c_4x+c_5y=1$ and can be ``linearized'' via the mapping $(x,y)\mapsto (x^2,y^2,xy,x,y)$. Thus, stabbing queries on the cells of the $V_c$'s reduce to simplex-stabbing queries in $\R^5$. We can handle such queries using (a multi-level version of) \Matousek{}'s partition tree~\cite{Matousek92} in $\R^5$,
        which has $\OOO(n)$ preprocessing time and $O^*(1)$ update time, such that
        given any query point $p$, we can in $O^*(n^{4/5})$ time find $O^*(n^{4/5})$ canonical subsets of the simplices containing $p$.
        
        \item For each canonical subset $B$ in the primary structure, let $S_B$ be the set of disks associated with the simplices in $B$. We keep two auxiliary structures on the associated disks. Given an input disk $s$, one auxiliary structure reports all disks that intersect $s$, while the other reports all disks that do not intersect $s$. Both structures reduce to repeated nearest/farthest neighbor queries for points with additive weights in $\R^2$. We can use results by Agarwal, Efrat, and Sharir~\cite{AgarwalES99} to answer queries in $\OOO(1+k)$ time with $\OOO(1)$ update time (or alternatively, Kaplan et al.'s data structure~\cite{KaplanMRSS20} with slightly better, polylogarithmic bounds).
    \end{itemize}

    The two-level data structure clearly satisfies the desired time bounds.
\end{proof}

\begin{corollary}
    \label{cor:disk-split-time}
    For $n$ disks in $\R^2$, we can solve Subproblem~\ref{prb:compute-eq-cls} in $\OOO(n+q^{8/5} n^{4/5})$ time. Furthermore, each insertion to $Q$ takes amortized $\OOO(q^{3/5} n^{4/5} + n/q)$ time. We can also support deletion of a component in amortized $\OOO(1)$ time and insertion of a component $c$ in amortized $\OOO(|c|)$ time, if we are given $c$'s class.
\end{corollary}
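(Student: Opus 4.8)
The plan is to mirror the argument of Corollary~\ref{cor:seg0-split-time} exactly, substituting the disk-based component (non-)intersection reporting structure of Lemma~\ref{lem:disk-comp-search} for the axis-aligned one. First I would store each equivalence class among components of $S$ in its own copy of the data structure from Lemma~\ref{lem:disk-comp-search}, and compute the classes by the repeated class splitting procedure from Section~\ref{sec:key}: insert the objects of $Q$ one at a time, and whenever an inserted disk $s$ splits a class $L$ into $L_1,L_2$, run the intersection and non-intersection reporting queries on $\calD(L)$ concurrently to identify the smaller child class $L'$ in time linear in $|L'|$ (up to the additive query overhead), then delete the components of $L'$ from $\calD(L)$ one by one and build $\calD(L')$ from scratch.

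Next I would bound the total running time by the same bookkeeping as before. For $i=0,\dots,q-1$, let $L_{i,1},\dots,L_{i,O(q^3)}$ be the classes present before the $i$-th insertion (the $O(q^3)$ bound coming from Lemma~\ref{lem:q3-classes-objects}), let $m_{i,j}$ be the total size of the displaced child when $L_{i,j}$ splits, and recall that the weighted-union/potential argument gives $\sum_{i,j} m_{i,j} = \OO(n+\Sigma)$ where $\Sigma$ is the total size of inserted components. Each split of $L_{i,j}$ costs, by Lemma~\ref{lem:disk-comp-search}, an additive $\OOO(n^{4/5})$ for the query overhead, an $\OOO(1)$ amortized per displaced component for the $m_{i,j}$ deletions, and $\OOO(|L'|)=\OOO(m_{i,j})$ for rebuilding $\calD(L')$. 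Summing the additive overhead over the $\sum_{i=1}^{q-1} O(q^3) = O(q^4)$ splits and using $n^{4/5}\le n$ when $q\le n^{1/5}$ (and noting that for larger $q$ the $q^{8/5}n^{4/5}$ term dominates anyway) yields total time
\begin{equation*}
    \OOO\!\left( q^4 n^{4/5} + n + \Sigma \right) \;=\; \OOO\!\left( n + q^{8/5} n^{4/5} + \Sigma \right),
\end{equation*}
where I would absorb $q^4 n^{4/5}$ into $q^{8/5}n^{4/5}$ using that $q \le n$ (so $q^{4} n^{4/5} = q^{8/5}\cdot q^{12/5} n^{4/5} \le q^{8/5} n^{12/5} n^{4/5}$ is too crude — more carefully, I expect the intended regime is $q=n^{\Theta(1/?)}$ making $q^4 n^{4/5}$ and $q^{8/5}n^{4/5}$ comparable up to the choice of exponent, or one simply keeps $\OOO(n + q^4 n^{4/5} + \Sigma)$ and observes $q^4 \le q^{8/5}\cdot n$ is false, so the cleaner route is to note $q \le n^{1/5}$ in the final application, giving $q^4 n^{4/5} \le q^{8/5} n^{4/5}\cdot q^{12/5} \le q^{8/5}n^{4/5}\cdot n^{12/25}$ — this needs the actual parameter balance, which I would carry out honestly against the $n^{7/8}$ update target). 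The $\Sigma$ term is charged to component insertions, and dividing the remaining $\OOO(n + q^{8/5}n^{4/5})$ by $q$ gives the claimed $\OOO(q^{3/5}n^{4/5}+n/q)$ amortized cost per insertion to $Q$.

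Finally, the component deletion and insertion claims transfer directly: deleting a component from its class means deleting it from that class's copy of the Lemma~\ref{lem:disk-comp-search} structure, which is $\OOO(1)$ amortized, and inserting a component $c$ into a given class (when we are told the class) means inserting it into that structure, which is $\OOO(|c|)$ amortized. The main obstacle, and the only place the disk case genuinely differs from the axis-aligned case, is confirming that the additive $\OOO(n^{4/5})$ query overhead of Lemma~\ref{lem:disk-comp-search} — rather than the $\tildeO(1)$ overhead in the axis-aligned setting — still leads to the stated bounds; this is exactly why the $q^4$ term in Corollary~\ref{cor:seg0-split-time} becomes $q^{8/5}n^{4/5}$ here, and I would need to be careful that this overhead is incurred once per split (not once per displaced component), so that the total is $O(q^4)\cdot\OOO(n^{4/5})$ and not something larger. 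Everything else is a line-for-line repetition of the previous proof.
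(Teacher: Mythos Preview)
Your overall outline is right, but there is a genuine gap in the accounting of the query overhead, and your own algebra (the paragraph where you try and fail to absorb $q^4 n^{4/5}$ into $q^{8/5}n^{4/5}$) is correctly telling you that something is off. The query cost from Lemma~\ref{lem:disk-comp-search} is $\OOO(|L|^{4/5}+k)$ where $|L|$ is the size of the class whose structure is being queried---not $\OOO(n^{4/5}+k)$ with the global $n$. So the additive overhead of the split of class $L_{i,j}$ is $\OOO(n_{i,j}^{4/5})$, not $\OOO(n^{4/5})$. Summing your way, with a flat $n^{4/5}$ per split over $O(q^4)$ splits, gives $\OOO(q^4 n^{4/5})$, which is strictly larger than the claimed $\OOO(q^{8/5} n^{4/5})$ for any nonconstant $q$; no parameter balance rescues that.

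The missing step is an application of H\"older's (or Jensen's) inequality to exploit the fact that, for each fixed $i$, the $O(q^3)$ class sizes $n_{i,j}$ sum to at most $n$. Concavity of $x\mapsto x^{4/5}$ gives
\[
\sum_{j=1}^{O(q^3)} n_{i,j}^{4/5} \;\le\; O(q^3)^{1/5}\,\Bigl(\sum_j n_{i,j}\Bigr)^{4/5} \;=\; O\bigl(q^{3/5} n^{4/5}\bigr),
\]
and summing this over the $q$ insertions yields the $\OOO(q^{8/5} n^{4/5})$ term. With that correction, the rest of your argument (the $\sum m_{i,j}=\OO(n+\Sigma)$ bound, charging $\Sigma$ to component insertions, and the $\OOO(1)$/$\OOO(|c|)$ component deletion/insertion costs) goes through exactly as you wrote.
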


\begin{proof}
    We define $n_{i,j}$ and $m_{i,j}$ for $i=1,\ldots,q$ and $j=1,\ldots,O(q^3)$, as well as $\Sigma$, in the same way as we did for axis-aligned segments. Using the same analysis, the total running time for class splitting operations over all insertions to $Q$ is at most
    \begin{equation*}
        O^*\left( \sum_{i=1}^{q} \sum_{j=1}^{O(q^3)} \left(n_{i,j}^{4/5} + m_{i,j} \right) \right),
    \end{equation*}
    which, by \Holder's inequality, is bounded by $O^*(q\cdot (q^{3/5} n^{4/5}) + n + \Sigma)$. This amortizes to $O^*(q^{3/5} n^{4/5} + n/q)$ time per insertion to $Q$.
\end{proof}

\subsection{Arbitrary Line Segments}

The component intersection reporting problem becomes tougher for the case of arbitrary line segments.  Although we are unable to get near-linear preprocessing time, we can obtain the following preprocessing/query-time trade-off.  The idea uses \emph{cuttings} and is inspired by a known idea for a different colored range searching problem (namely, ``range mode'', i.e., finding the most frequent color in a range)~\cite{ChanDLMW14}.

\begin{lemma}\label{lem:seg:cuttings}
    Given a set of connected components formed by $n$ line segments in $\R^2$ and a parameter $r$, there is a component (non-)intersection reporting data structure with $\OOO(nr^4)$ preprocessing time and $\OOO(1 + n/r + k)$ query time, where $k$ is the number of components reported. Furthermore, we can support deletion of a component in $\OOO(r^4)$ amortized time and insertion of a component $c$ in  $\OOO(|c|r^4)$ amortized time.
\end{lemma}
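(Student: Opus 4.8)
The plan is to lift segment--segment crossing to a point-location problem in $\R^4$ and attack it with a cutting, following the ``handle full cells by precomputation, boundary cells by brute force'' paradigm behind the range-mode structure of~\cite{ChanDLMW14}. I would tag each input segment with the identifier of its component (call it the segment's \emph{color}), parametrize a query segment $s=[p,q]$ by the point $x_s=(p,q)\in\R^4$, and for each input segment $t$ let $R_t\subseteq\R^4$ be the set of $x_s$ for which $s$ crosses $t$. Since $s$ crosses $t$ exactly when the endpoints of $t$ lie on opposite sides of the line through $p,q$ \emph{and} $p,q$ lie on opposite sides of the line through $t$, each $R_t$ is semialgebraic of constant description complexity. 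First I would build a $(1/r)$-cutting of the $n$ boundary surfaces $\partial R_t$ in $\R^4$ --- a family of $O^*(r^4)$ cells of constant complexity, each crossed by at most $n/r$ of the surfaces --- together with a point-location structure; this costs $O^*(nr^4)$ by standard cutting constructions, each $\partial R_t$ meets only $O^*(r^3)$ cells, and the per-cell crossing lists have total size $O(nr^3)$.

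At each cell $\Delta$ I would store: $\mathrm{cross}(\Delta)$, the $\le n/r$ segments whose surface crosses $\Delta$, each tagged with its color; $B_\Delta$, the set of colors $\chi$ such that $\Delta\subseteq R_t$ for some segment $t$ of color $\chi$ (with a per-color reference count equal to the number of such $t$); and $Y_\Delta$, the set of all \emph{other} currently present colors. Since a segment $t\notin\mathrm{cross}(\Delta)$ satisfies either $\Delta\subseteq R_t$ or $\Delta\cap R_t=\emptyset$ (decided by one point test against a representative point of $\Delta$), $B_\Delta$ and $Y_\Delta$ are each computable in $O(n)$ time per cell, $O^*(nr^4)$ overall; I would also keep, for each color, the lists of cells containing it in $B_\Delta$ and in $Y_\Delta$, and for each segment the list of its $O^*(r^3)$ crossing cells. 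To answer an \emph{intersection} query for $s$ I would locate $x_s$ in its cell $\Delta$, output all of $B_\Delta$, then scan $\mathrm{cross}(\Delta)$ and output the color of each $t$ there with $x_s\in R_t$, using a reusable mark array to suppress duplicates. Every color output meets $s$, and conversely if $s$ meets some $t$ of color $\chi$ then either $\Delta\subseteq R_t$ (so $\chi\in B_\Delta$) or $\partial R_t$ crosses $\Delta$ (so $t\in\mathrm{cross}(\Delta)$ and $x_s\in R_t$); hence the output is exactly the set of intersected colors, in $O^*(n/r+k)$ time since $|B_\Delta|\le k$.

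The crux is the \emph{non-intersection} query, where I would use the ``missing colors'' idea. Having located $x_s$ in $\Delta$, I would compute in $O(n/r)$ time the set $Z$ of colors $\chi$ such that some $t\in\mathrm{cross}(\Delta)$ of color $\chi$ has $x_s\in R_t$, and then scan $Y_\Delta$, outputting every color not in $Z$. A color $\chi$ misses $s$ iff no segment of color $\chi$ meets $s$; classifying each such segment $t$ as $\Delta\subseteq R_t$, $\Delta\cap R_t=\emptyset$, or $\partial R_t$ crossing $\Delta$, this happens iff $\chi\notin B_\Delta$ and $\chi\notin Z$, so the non-intersected colors are exactly $Y_\Delta\setminus Z$. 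The key quantitative point --- the analogue of the range-mode bound on the number of boundary candidates --- is that $|Y_\Delta|=|Y_\Delta\setminus Z|+|Y_\Delta\cap Z|\le k+|Z|\le k+n/r$, so the scan runs in $O^*(n/r+k)$ time, as required.

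Finally I would handle updates cell by cell: to insert a segment $t$ of color $\chi$, add $t$ to the $O^*(r^3)$ crossing lists it joins, increment $\chi$'s reference count at each cell with $\Delta\subseteq R_t$ (moving $\chi$ from $Y_\Delta$ into $B_\Delta$ when it becomes positive), and, if $\chi$ is a brand-new color, insert $\chi$ into $Y_\Delta$ at the $O^*(r^4)$ cells with $\chi\notin B_\Delta$; inserting a component $c$ thus costs $O^*(r^4+|c|r^3)=O^*(|c|r^4)$. Deletion is the reverse, at cost $O^*(r^4+|c|r^3)$, and charging the $O^*(|c|r^3)$ term to the preprocessing (if $c$ was present initially) or to the earlier insertion of $c$ leaves an amortized cost of $O^*(r^4)$ per component deletion. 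I expect the main obstacle to be the geometric setup --- pinning down a lift of segment--segment crossing whose $n$ bounding surfaces in $\R^4$ admit a $(1/r)$-cutting of size $O^*(r^4)$ with each cell met by only $n/r$ of them (rather than blowing up the dimension via crude linearization) --- after which the combinatorial heart reduces to the one-line bound on $|Y_\Delta|$ above.
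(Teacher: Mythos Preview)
Your proposal is correct and follows essentially the same approach as the paper: lift segment--segment intersection to point-vs.-semialgebraic-set in $\R^4$, build a $(1/r)$-cutting with $\OOO(r^4)$ cells (this is exactly what Koltun's vertical-decomposition bound gives, so your ``main obstacle'' is already resolved), and at each cell store the conflict list together with a two-way classification of the remaining colors. The only cosmetic differences are that the paper excludes colors with a crossing segment from its two lists $A(\Delta),B(\Delta)$ (so $|B(\Delta)|\le k$ holds directly, without your $|Y_\Delta|\le k+n/r$ detour) and handles component insertion by the logarithmic method rather than by direct per-cell updates; your intermediate cost $\OOO(r^4+|c|r^3)$ under-counts the reference-count updates at the up to $\OOO(r^4)$ cells fully inside each $R_t$, but your stated final bound $\OOO(|c|r^4)$ is correct either way.
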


\begin{proof}
    A line segment in $\R^2$ can be represented by 4 real values (the coordinates of its two endpoints) and so can be viewed as a point in $\R^4$.
    For each input line segment $s$, the set of all line segments intersecting $s$ then corresponds a semialgebraic set in $\R^4$ with constant description complexity.  Let $S$ be the collection of these $n$ semialgebraic sets.
    
    Now, compute a \emph{$(1/r)$-cutting} $\Gamma$ of $S$, consisting of
    $\OOO(r^4)$ cells (due to known combinatorial bounds on vertical decompositions)~\cite{Koltun04},
    in $\OOO(nr^4)$ time.
    
    For each cell $\Delta\in\Gamma$, we store the \emph{conflict list} of $\Delta$, consisting of all semialgebraic sets $s\in S$ with boundaries crossing $\Delta$; by definition of cuttings, each conflict list has size $O(n/r)$. For each component $c$ with no semialgebraic set in the conflict list of $\Delta$, we store $c$ in a list $A(\Delta)$ if at least one semialgebraic set from $c$ completely contains $\Delta$, or store $c$ in another list $B(\Delta)$ otherwise. To allow for efficient update, for each component $c$, we keep track of all semialgebraic sets from $c$ that contains $\Delta$.
    
    For each semialgebraic set $s\in S$ and each cell $\Delta\in\Gamma$, we can decide in $O(1)$ time which list $s$ should be stored in. Thus, the data structure has $O^*(nr^4)$ preprocessing time. For the same reason, deletion of a semialgebraic set can be handled in $O^*(r^4)$ time.
    
    To handle a query for a line segment represented as a point $p\in\R^4$, we first locate the cell $\Delta\in\Gamma$ containing $p$. To report all components stabbed (resp. not stabbed) by $p$, we linearly search the conflict list, and then report the components stored in the list $A(\Delta)$ (resp. $B(\Delta)$). The reporting takes $O^*(1 + n/r + k)$ time, where $k$ is the output size.

    Insertions of components can be handled by the logarithmic method~\cite{BentleyS80}.
\end{proof}

\begin{corollary}
    \label{cor:segs-split-time}
    For $n$ line segments in $\R^2$ and a parameter $r$, we can solve Subproblem~\ref{prb:compute-eq-cls} in $\OOO(r^4n + qn/r + q^4)$ time. Furthermore, each insertion to $Q$ takes amortized $\OOO(n/r + q^3 + r^4 n/q)$ time. We can also support deletion of a component in $\OOO(r^4)$ time and insertion of a component $c$ in amortized $\OOO(|c|r^4)$ time, if we are given $c$'s class.
\end{corollary}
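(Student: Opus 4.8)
The plan is to follow the proofs of Corollaries~\ref{cor:seg0-split-time} and~\ref{cor:disk-split-time} almost verbatim, now using Lemma~\ref{lem:seg:cuttings} (instantiated with the given parameter $r$) as the component (non-)intersection reporting structure. We compute the equivalence classes incrementally, inserting the $q$ objects of $Q$ one at a time, and keep each current class $L$ stored in an instance of the Lemma~\ref{lem:seg:cuttings} structure built on the segments of $L$. Initially there is a single class comprising all connected components of $S$, whose structure is built in $\OOO(nr^4)$ time. When an object $s\in Q$ is inserted, we examine every existing class $L$: running the two reporting algorithms of Lemma~\ref{lem:seg:cuttings} concurrently and stopping as soon as one of them halts, we either certify that $s$ meets all, or none, of the components of $L$ (and leave $L$ untouched), or we obtain the smaller child $L'$. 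In the latter case we build a fresh structure for $L'$ from scratch, then delete each component of $L'$ from the structure of $L$, which thereby becomes the structure of the larger child. Since building $L'$'s structure costs $\OOO(mr^4)$ and deleting its at-most-$m$ components from $L$ costs $\OOO(mr^4)$ as well (where $m$ is the total size of $L'$), a split displacing $m$ segments costs $\OOO(1+|L|/r+mr^4)$, the $\OOO(1+|L|/r)$ accounting for the halting query.

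For the accounting, introduce, exactly as before, the classes $L_{i,1},\dots,L_{i,O(q^3)}$ present before the $i$-th insertion, their sizes $n_{i,1},\dots,n_{i,O(q^3)}$, the displaced-child sizes $m_{i,j}$, and $\Sigma$, the total size of all inserted components. The same potential-function argument ($\Phi=\sum_L|L|\log|L|$) gives $\sum_{i,j}m_{i,j}=\OOO(n+\Sigma)$. A class that is \emph{not} split during the $i$-th insertion still costs $\OOO(1+n_{i,j}/r)$, since scanning the conflict list in Lemma~\ref{lem:seg:cuttings} is unavoidable even when no component is reported. Hence the total cost of all class-splitting work over the phase is
\[
\OOO\!\left( \sum_{i=1}^{q-1}\sum_{j=1}^{O(q^3)} \Bigl( 1+\frac{n_{i,j}}{r}+m_{i,j}r^4 \Bigr) \right)
\;=\;\OOO\!\left( q^4+\frac{qn}{r}+r^4(n+\Sigma) \right),
\]
using $\sum_j n_{i,j}\le n$ for each $i$. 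Adding the $\OOO(nr^4)$ initial build and charging the $r^4\Sigma$ term to the component insertions gives $\OOO(r^4n+qn/r+q^4)$ for Subproblem~\ref{prb:compute-eq-cls}, which amortizes to $\OOO(n/r+q^3+r^4n/q)$ per insertion to $Q$. The stated bounds $\OOO(r^4)$ for component deletion and amortized $\OOO(|c|r^4)$ for insertion of a component $c$ into its given class follow immediately from the update bounds of Lemma~\ref{lem:seg:cuttings} applied to the relevant class's structure.

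The analysis is routine once Lemma~\ref{lem:seg:cuttings} is available; the only point that deserves care---and the reason the bound has a different shape than in the axis-aligned and disk cases---is that the query time of Lemma~\ref{lem:seg:cuttings} is merely \emph{sublinear} in the class size, carrying an additive $\Theta(|L|/r)$ term that \emph{every} class pays on \emph{every} insertion. Consequently a plain sum over classes, rather than a \Holder-type estimate as in Corollary~\ref{cor:disk-split-time}, already yields the $qn/r$ term, and one must take care to attribute the $\OOO(r^4)$-per-segment (re)building and $\OOO(r^4)$-per-component deletion costs (which enter through the $r^4(n+\Sigma)$ contribution) correctly to splits and to component insertions, without double-counting them against the query cost.
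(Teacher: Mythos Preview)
Your proof is correct and follows essentially the same route as the paper's own proof: you instantiate Lemma~\ref{lem:seg:cuttings} for the class-splitting structure, set up the same per-class cost $\OOO(1+n_{i,j}/r+m_{i,j}r^4)$, and sum using $\sum_j n_{i,j}\le n$ together with the potential-function bound $\sum_{i,j}m_{i,j}=\OOO(n+\Sigma)$ to obtain $\OOO(q^4+qn/r+r^4(n+\Sigma))$, exactly as the paper does. Your exposition is simply more detailed than the paper's terse two-line argument.
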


\begin{proof}
    We define $n_{i,j}$ and $m_{i,j}$ for $i=1,\ldots,q$ and $j=1,\ldots,O(q^3)$, as well as $\Sigma$, in the same way as we did for axis-aligned segments and disks. Using the same analysis, the total running time for class splitting over all insertions to $Q$ is at most
    \begin{equation*}
        \OOO\left( \sum_{i=1}^{q} \sum_{j=1}^{O(q^3)} \left( 1 + \frac{n_{i,j}}{r} + m_{i,j} r^{4} \right) \right)
        \:=\: \OOO\left( \frac{qn}{r} + q^4 + r^4 (n + \Sigma) \right),
    \end{equation*}
    This amortizes to $\OOO(n/r + q^3 + r^4 n/q)$ per insertion to $Q$.
\end{proof}

\section{Dynamic Connectivity}
\label{sec:dynconn}

Now, we apply the equivalence-class data structures developed in Section~\ref{sec:key} to design dynamic connectivity structures for axis-aligned line segments, disks, and arbitrary line segments in $\R^2$. We follow the basic approach from Section~\ref{sec:recap} but describe how to handle general online updates.
%The methods for the three cases are almost identical.

\subsection{Axis-Aligned Line Segments}
\label{sec:dynconn-seg0}

For axis-aligned segments, we maintain the following data structures during a phase:

\begin{itemize}
    \item The equivalence-class data structure.
    \item A decremental data structure for explicitly maintaining the connected components of $S$, along with their sizes. This structure has $\tildeO(n)$ preprocessing time and $\tildeO(n)$ total update time over the entire phase. We achieve this by reducing to decremental graph connectivity under edge deletions~\cite{Thorup99} (using the biclique cover technique from Chan~\cite{Chan06}).
    \item For each component $c$, an orthogonal intersection searching structure with $\tildeO(1)$ query and update time (e.g., \cite{EdelsbrunnerM81}).
\end{itemize}

Throughout the phase, we maintain a proxy graph $H$ in the same way as in Afshani and Chan's method, as described in Section~\ref{sec:recap}.  Because 
updates are online, we need to handle not only insertions and deletions in $Q$ but also deletions in $S$.

\subparagraph*{Insertion/deletion in $Q$.} 
Insertions in $Q$ can be done as before, since the equivalence-class structure already supports insertions to $Q$. By Corollary~\ref{cor:seg0-split-time}, the amortized cost of equivalence class maintenance per insertion to $Q$ is $\tildeO(q^3 + n/q)$. 
The number of edge changes in the proxy graph $H$ is $O(q^4)$; we can afford to reconstruct the connectivity structure of $H$ in $O(q^4)$ time.
%Updating the proxy graph $H$ takes $\OO(q^6)$ time.
%\timothy{couldn't there be $q^7$ edge changes in $H$, if all classes get split?}

For deletions in $Q$, we don't even need to update the equivalence-class data structure.

\subparagraph*{Deletion in $S$.} Suppose that a segment $s\in S$ is deleted in the $i$-th update. Let $c$ be the component that contains $s$. The deletion of $s$ may divide $c$ into smaller sub-components $c_1,\ldots,c_z$, listed in order of decreasing size. Let $m_i=|c_2|+\cdots+|c_z|$. Since $|c_2|,\ldots,|c_z|\leq |c|/2$, the sum $\sum_i m_i$ over the entire phase is $\tildeO(n)$.

First, we update the decremental connectivity structure; this takes $\tildeO(n)$ time over the entire phase. We then report the disks of $c_2,\ldots,c_z$, delete them from the orthogonal intersection searching structure for $c$, and then rebuild the structure for each of $c_2,\ldots,c_z$. This takes $\tildeO(m_i)$ time. We make $c_1$ a singleton class, if it isn't already. 
We delete $c_1$ from the equivalence-class structure.\footnote{
All we want is that if two components are in the same class, they intersect the same elements of $Q$; we don't need the converse.  This explains why it is fine to handle some  classes such as $c_1$ separately, and why we can ignore deletions in $Q$ in the equivalence-class structure.
}
We create at most $q$ singleton classes this way, which is negligible.

Next, we compute the classes among $c_2,\ldots,c_z$ with respect to $Q$ by ``replaying'' the insertions in $Q$ from scratch.
This takes $\tildeO(q^4+m_i)$ time. For each class $L'$ found this way, we find the existing class $L$ that is equivalent to $L'$ (this takes $\OO(q^4)$ total time, by viewing each class as a $q$-bit vector and sorting $O(q^3)$ such vectors). If such a class $L$ exists, then we insert the components in $L'$ to the equivalence-class structure.
This takes $\tildeO(m_i)$ amortized time. Thus, the total time for handling deletions is bounded by $\tildeO(q^5 + n)$, which amortizes to $\tildeO(q^4 + n/q)$. This is the
overall amortized update time. Choosing $q=n^{1/5}$, we achieve update time $\tildeO(n^{4/5})$.

\begin{theorem}
    For $n$ axis-aligned line segments in $\R^2$, there is a dynamic connectivity data structure with $O(1)$ query time and $\tildeO(n^{4/5})$ amortized update time.
\end{theorem}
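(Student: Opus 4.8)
The plan is to follow the lazy-rebuilding framework recalled in Section~\ref{sec:recap}, now adapted to online updates. We cut the update stream into phases of $q$ updates and, at the start of each phase, pay a preprocessing cost that is amortized against those $q$ updates. Within a phase, $S$ denotes the objects present at the start that are not deleted during the phase (so $S$ stays static), and $Q$ denotes the at most $q$ objects touched during the phase. During a phase I would maintain four structures: (i) the equivalence-class data structure of Corollary~\ref{cor:seg0-split-time}, built in $\OO(n+q^4)$ time, supporting insertions into $Q$ in $\OO(q^3+n/q)$ amortized time and supporting deletion/insertion of individual components (given their class) in $\OO(1)$/$\OO(|c|)$ amortized time; (ii) a decremental connectivity structure for the components of $S$ with their sizes, with $\OO(n)$ preprocessing and $\OO(n)$ total update time over the phase, obtained via biclique covers and decremental graph connectivity~\cite{Chan06,Thorup99}; (iii) per-component orthogonal intersection-searching structures with $\OO(1)$ query and update time; and (iv) the proxy graph $H$ (class vertices, insertion vertices, intersection edges), which has $O(q^4)$ vertices and edges by the improved combinatorial bound of Lemma~\ref{lem:q3-classes-objects}, stored in a polylogarithmic dynamic graph connectivity structure. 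Queries between two objects, as well as global connectivity queries, are then answered in $O(1)$ time exactly as in Section~\ref{sec:recap}.

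Next I would specify how the three update types are handled. An insertion into $Q$ is fed to the equivalence-class structure (amortized $\OO(q^3+n/q)$), after which the $O(q^4)$ edges of $H$ are recomputed and its connectivity structure is rebuilt from scratch in $O(q^4)$ time. A deletion from $Q$ requires no change to the equivalence-class structure at all—safe because we only need ``same class $\Rightarrow$ same intersection pattern with $Q$'', not the converse (cf.\ the footnote in Section~\ref{sec:dynconn-seg0})—so we merely drop the corresponding insertion vertex from $H$. The delicate case is a deletion from $S$: removing a segment $s$ from its component $c$ may shatter $c$ into sub-components $c_1,\ldots,c_z$ ordered by decreasing size, with $m_i := |c_2|+\cdots+|c_z| \le |c|/2$, so $\sum_i m_i = \OO(n)$ over the phase. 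I would update the decremental structure ($\OO(n)$ total per phase), report $c_2,\ldots,c_z$ and split their intersection-searching structures off in $\OO(m_i)$ time, make $c_1$ into a singleton class removed from the equivalence-class structure (creating at most $q$ such negligible singleton classes per phase), then \emph{replay} the phase's $Q$-insertions restricted to $c_2,\ldots,c_z$ to compute their classes in $\OO(q^4+m_i)$ time, match each new class against the existing $O(q^3)$ classes by viewing each as a $q$-bit vector and sorting ($\OO(q^4)$ total), and reinsert the small sub-components into the matching existing class in $\OO(m_i)$ amortized time, updating $H$ accordingly.

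For the cost analysis: preprocessing per phase is $\OO(n+q^4)$. Over a phase, the $O(q)$ insertions into $Q$ cost $\OO(q\cdot(q^3+n/q) + q\cdot q^4) = \OO(q^5+n)$, and the $O(q)$ deletions from $S$ cost $\OO(q^5+n)$ as well (the $q^5$ coming from $O(q)$ deletions each paying $\OO(q^4)$ for replay, matching, and rebuilding the connectivity structure of $H$; the $n$ from $\sum_i m_i = \OO(n)$). Dividing the $\OO(q^5+n)$ total by the $q$ updates of the phase gives $\OO(q^4+n/q)$ amortized per update. Choosing $q = n^{1/5}$ balances the two terms and yields the claimed $\OO(n^{4/5})$ amortized update time, with $O(1)$ query time.

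The main obstacle, and where the argument needs the most care, is the $S$-deletion case: a single deletion can in principle flip $\Omega(n)$ edges of $H$, so correctness and efficiency rest on three interacting ingredients—(a) the weighted-union-type bound $\sum_i m_i = \OO(n)$, which lets us pay only for the \emph{smaller} pieces $c_2,\ldots,c_z$ and never touch the bulk $c_1$; (b) the freedom to extract $c_1$ (and similar leftover components) as its own singleton class without recomputing anything, justified by the one-directional class invariant and by handling singleton/isolated classes specially in queries; and (c) the fact that replaying $Q$ only on the small pieces, rather than on all of $S$, keeps the $O(q^4)$ per-deletion overhead chargeable to the $q$ updates of the phase. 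Verifying that these mesh correctly—in particular, that $H$ remains a faithful surrogate for the connectivity of $S\cup Q$ after every update—is the part of the proof that must be checked in detail.
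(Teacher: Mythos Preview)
Your proposal is correct and follows essentially the same approach as the paper's proof in Section~\ref{sec:dynconn-seg0}: the same four maintained structures, the same handling of insertions/deletions in $Q$ and of deletions in $S$ via the singleton-class trick for $c_1$ and replay for $c_2,\ldots,c_z$, and the same $\OO(q^4+n/q)$ balance yielding $q=n^{1/5}$. One minor slip in wording: you define $S$ as the objects \emph{not} deleted during the phase (the offline convention from Section~\ref{sec:recap}) yet then---correctly---handle online deletions from $S$; in the online setting $S$ should simply be all objects present at the start of the phase.
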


\subsection{Disks}
\label{sec:dynconn-disk}

The dynamic data structure for disks is similar to the case of axis-aligned segments, except we use a different decremental connectivity structure and a different intersection searching structure.

\begin{itemize}
    \item For decremental connectivity, we use the data structure by Kaplan {\em et al.}~\cite{KaplanKKKMRS22}, which achieves $\tildeO(n)$ preprocessing time and $\tildeO(n)$ total update time over the entire phase.
    \item For intersection searching, we use the data structure by Kaplan {\em et al.}~\cite{KaplanMRSS20}, which achieves $\tildeO(1)$ query and update time.
\end{itemize}

\subparagraph*{Update time.} The update algorithm is identical to the case of axis-aligned segments. By Corollary~\ref{cor:disk-split-time}, the amortized cost of equivalence class maintenance per insertion to $Q$ is $O^*(q^{3/5} n^{4/5} + n/q)$. For deletion, we define $m_i$ for $i=1,\ldots,q$ in the same way as we did for axis-aligned segments. Using the same analysis, the total cost over all deletions in $S$ is bounded by
\begin{equation*}
\begin{aligned}
    O^*\left( \sum_{i=1}^q \left( q^{8/5} m_i^{4/5} + m_i \right) + n \right)
    \:=\: O^*\left( q^{9/5} n^{4/5} + n \right),
\end{aligned}
\end{equation*}
which amortizes to $O^*(q^{4/5} n^{4/5} + n/q)$ per deletion. This is the overall amortized update time. Choosing $q=n^{1/9}$, we achieve update time $O^*(n^{8/9})$.

\begin{theorem}
    \label{thm:disk-basic}
    For $n$ disks in $\R^2$, there is a dynamic connectivity data structure with $O(1)$ query time and $O^*(n^{8/9})$ amortized update time.
\end{theorem}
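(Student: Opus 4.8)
The plan is to follow the phase-based framework of Section~\ref{sec:dynconn-seg0} closely, replacing the axis-aligned equivalence-class machinery by its disk counterpart (Corollary~\ref{cor:disk-split-time} together with Lemma~\ref{lem:disk-comp-search}). Fix a phase parameter $q$, and at the start of a phase let $S$ be the disks that are present throughout it. We maintain: (i) the disk equivalence-class structure of Corollary~\ref{cor:disk-split-time}, built in $\OOO(n)$ time; (ii) a \emph{decremental} connectivity structure for the disk intersection graph of $S$ that returns, for each disk, its component and the size of that component, and on each deletion can enumerate the smaller pieces of the resulting split in time proportional to their total size --- for arbitrary disks this follows from the known decremental disk-connectivity data structures of Kaplan et al.~\cite{KaplanKKKMRS22} (polylogarithmic amortized cost), augmented with a spanning-forest/Euler-tour representation, so the total cost over a phase is $\OOO(n)$; (iii) for each component, a dynamic disk-intersection-emptiness structure (e.g.\ a dynamic additively-weighted nearest-neighbor structure) with $\OOO(1)$ query and polylogarithmic amortized update, playing the role of the per-component orthogonal intersection-searching structures of Section~\ref{sec:dynconn-seg0}; and (iv) the proxy graph $H$ of Section~\ref{sec:recap}, which by the improved combinatorial lemma (Lemma~\ref{lem:q3-classes-objects}, applicable since two circles meet at most twice) has $O(q^3)$ class vertices, $O(q)$ insertion vertices, and $O(q^4)$ edges, together with pointers disk $\to$ component $\to$ class $\to$ $H$-vertex and a component labeling of $H$ recomputed after each update by an $O(q^4)$-time graph search. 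Queries --- pairwise and global --- are then answered in $O(1)$ exactly as in Section~\ref{sec:recap}.

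Updates follow Section~\ref{sec:dynconn-seg0}. An insertion in $Q$ is passed to the equivalence-class structure at amortized cost $\OOO(q^{3/5}n^{4/5}+n/q)$ by Corollary~\ref{cor:disk-split-time}, followed by an $O(q^4)$ rebuild of $H$'s labeling; a deletion in $Q$ only deletes a vertex from $H$ (the equivalence-class structure is untouched on $Q$-deletions, since we only need that components sharing a class hit the same elements of $Q$, not the converse). For the deletion of a disk in $S$ during the $i$-th update, let $c$ be its component, which breaks into pieces $c_1,\dots,c_z$ in order of decreasing size, and put $m_i=|c_2|+\cdots+|c_z|$; since each displaced disk has its component size at least halved, $\sum_i m_i=\OOO(n)$. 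We update the decremental structure; delete the disks of $c_2,\dots,c_z$ from $c$'s emptiness structure and rebuild emptiness structures for $c_2,\dots,c_z$; install $c_1$ as its own singleton class, determining its $O(q)$ incident $H$-edges by $O(q)$ emptiness queries --- all in $\OOO(m_i+q)$ time; and recompute the equivalence classes among $c_2,\dots,c_z$ with respect to the current $Q$ by replaying all insertions to $Q$ from scratch, which by Corollary~\ref{cor:disk-split-time} applied to a set of size $m_i$ costs $\OOO(m_i+q^{8/5}m_i^{4/5})$. For each resulting class $L'$ we form its $q$-bit intersection signature (tracked during the replay) and, by sorting the $O(q^3)$ signatures, locate an equivalent existing class $L$ in $\OOO(q^4)$ total time; if $L$ exists we reinsert the components of $L'$ into the equivalence-class structure at amortized cost $\OOO(|c|)$ each, i.e.\ $\OOO(m_i)$ for this deletion, and rebuild $H$'s labeling.

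To bound the amortized update time, sum over the at most $q$ deletions in $S$. By \Holder's inequality applied to the $m_i$'s with $\sum_i m_i=\OOO(n)$, the replay term is $\sum_i q^{8/5}m_i^{4/5}\le\OOO(q^{8/5}\cdot q^{1/5}n^{4/5})=\OOO(q^{9/5}n^{4/5})$, which amortizes to $\OOO(q^{4/5}n^{4/5})$ per update; the $\sum_i\OOO(m_i)$ terms (emptiness-structure rebuilds, class reinsertions) together with the decremental structure and the per-phase preprocessing contribute $\OOO(n/q)$ per update; the signature sorting and the $H$-labeling rebuilds contribute $\OOO(q^4)$ per update; and the $Q$-insertion cost $\OOO(q^{3/5}n^{4/5}+n/q)$ is dominated. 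Hence the amortized update time is $\OOO(q^{4/5}n^{4/5}+q^4+n/q)$. Choosing $q=n^{1/9}$ balances the first and third terms at $n^{8/9}$, with $q^4=n^{4/9}$ of lower order, giving $\OOO(n^{8/9})$ amortized update time and $O(1)$ query time.

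The step I expect to be the main obstacle is ingredient~(ii). In the axis-aligned case a near-linear-size biclique cover reduces the maintenance of the components of $S$ to decremental \emph{graph} connectivity under edge deletions~\cite{Thorup99}; for disks, biclique covers are superlinear, so the intersection graph of $S$ cannot be materialized within the per-phase budget, and one must invoke a dedicated decremental disk-connectivity structure~\cite{KaplanKKKMRS22} and argue that it can be augmented to report component sizes, to enumerate the smaller side of each split in time proportional to its size, and to maintain the disk-to-component pointers used by queries, all at near-linear total cost over a phase. Everything else is the accounting above: the reverse weighted-union amortization of Section~\ref{sec:key} to get $\sum_i m_i=\OOO(n)$, \Holder's inequality to absorb the sublinear per-operation query bound of Lemma~\ref{lem:disk-comp-search}, and checking that no term --- preprocessing, $Q$-updates, $S$-deletions, replays, signature matching, $H$-rebuilds --- exceeds $n^{8/9}$ when $q=n^{1/9}$.
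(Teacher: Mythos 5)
Your proposal is correct and follows essentially the same route as the paper's own proof in Appendix~\ref{apd:dynconn-disk}: the phase framework of Section~\ref{sec:dynconn-seg0} with the disk equivalence-class structure of Corollary~\ref{cor:disk-split-time}, the decremental disk connectivity structure of Kaplan et al.~\cite{KaplanKKKMRS22}, and per-component additively-weighted nearest-neighbor (intersection) searching via~\cite{KaplanMRSS20}, followed by the same H\"older-inequality accounting and the choice $q=n^{1/9}$. The obstacle you flag in ingredient~(ii) is indeed where the paper leans on~\cite{KaplanKKKMRS22}, exactly as you suspected.
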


\subsection{Arbitrary Line Segments}
\label{sec:dynconn-segs}
 
For the case of arbitrary line segments the idea is again similar to the case of axis-aligned segments, but because the time bound for Subproblem~\ref{prb:compute-eq-cls} is superlinear and dependent on the parameter $r$ even when $q$ is small, the settings of parameters are more delicate.  There is also a new issue: we don't have a decremental connectivity structure for arbitrary line segments with both near-linear preprocessing time and near-linear total update time (the best static algorithm requires $O(n^{4/3})$ time~\cite{ChanZ22}).

Fortunately, we observe that in the preprocessing step between every two update phases, no more than $q$ new segments are added to the decremental connectivity structure. Thus, the decremental connectivity structure does not necessarily need $O^*(n)$ preprocessing time. Instead, it only needs to support {\em batch insertion} of at most $q$ segments in near-linear time. Such a data structure has been given by Chan, \Patrascu{}, and Roditty~\cite{ChanPR11}.

\begin{lemma}[\cite{ChanPR11}]
    Let $S$ be a set of $n$ line segments in $\R^2$. There is a decremental connectivity data structure with $\tildeO(n)$ total deletion time over any sequence of deletions, and supports batch insertion of any $q$ line segments in $\tildeO(n + q\sqrt{n})$ time.
\end{lemma}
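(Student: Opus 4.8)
The plan is to follow the standard Holm--de~Lichtenberg--Thorup framework for decremental (indeed fully dynamic) connectivity, but to supply the edges of the intersection graph \emph{implicitly} through geometric intersection searching rather than storing them. We maintain a spanning forest of the intersection graph together with the usual level assignment on the (implicit) non-tree edges. When a segment $s$ is deleted, we remove the at most $O(|c_s|)$ spanning-forest edges incident to $s$ inside its component $c_s$, and for each resulting potential split we search for replacement edges by exploring the \emph{smaller} side of the split in the level forest, exactly as in Holm--de~Lichtenberg--Thorup. The combinatorial accounting is the usual one: a segment is pushed into a component of at most half the size (or to a higher level) only $O(\log n)$ times, so the total number of segment re-examinations over the whole deletion sequence is $\tildeO(n)$.

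The geometric ingredient replacing explicit edge enumeration is a segment-intersection structure that, for the set of segments stored at a forest node and a query segment $t$, can (i) report the components of those segments intersecting $t$ and (ii) detect a crossing edge between a queried subset and its complement (so that replacement edges can be found). Viewing each segment as a point in $\R^4$ and its stabbing set as a constant-complexity semialgebraic region, this is a colored semialgebraic intersection-searching problem, solvable by partition-tree and cutting-based structures that support deletions. The first main step is to organize these structures so that the per-examination cost, multiplied against the $\tildeO(n)$ re-examination bound, still sums to $\tildeO(n)$ total deletion time; following Chan--\Patrascu--Roditty this is achieved by indexing the implicit edge set by Holm--de~Lichtenberg--Thorup level (so that a re-examined segment is always charged against a geometrically shrinking side), by using biclique/cutting-based sparsifications of the intersection relation so a search touches only $\tildeO(1+k)$ canonical pieces, and by updating these structures incrementally rather than rebuilding. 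This is the technical heart of the lemma and the point where a naive combination, which only yields roughly $n^{3/2}$ total deletion time, has to be replaced by more careful bookkeeping.

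For batch insertion of a set $I$ of $q$ new segments, the second step is: build or extend the intersection-search structures over $S\cup I$ in $\tildeO(n+q)$ time; for each $t\in I$, run a colored intersection query against the current components to obtain the components that $t$ touches, at cost $\tildeO(\sqrt n + k_t)$; and then merge those components and the new segments in a union-find-style pass while updating the spanning forest. Since we only need one inter-component edge per merge, the $k_t$ can be capped so that $\sum_{t\in I}k_t=\tildeO(n)$, while the $\tildeO(\sqrt n)$ query terms sum to $\tildeO(q\sqrt n)$, giving the claimed $\tildeO(n+q\sqrt n)$ bound. I expect the main obstacle to be the $\tildeO(n)$ total deletion time: the spanning-forest maintenance, the union-find pass, the $\R^4$ reformulation, and deletion support in partition trees are all routine, but arranging that the $\sqrt{\,\cdot\,}$ factor in every geometric query is taken over a set whose size is comparable to the side currently being explored is exactly what Chan--\Patrascu--Roditty's construction delivers, and reproducing that is the real work.
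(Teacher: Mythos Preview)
The paper does not prove this lemma at all; it is stated with a citation to \cite{ChanPR11} and immediately used as a black box. There is therefore no ``paper's own proof'' to compare your attempt against.

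As a sketch of the Chan--P\u{a}tra\c{s}cu--Roditty construction, your outline is in the right spirit but is not a proof. The batch-insertion half is essentially correct: an $\tildeO(\sqrt{n})$-time segment-intersection query per inserted segment, plus union-find merging, does give the stated $\tildeO(n+q\sqrt{n})$ bound. The decremental half, however, is exactly where you yourself flag the gap and do not close it. Running HLT over an implicit edge set naively costs an intersection query (at least $\sqrt{\cdot}$) per re-examined segment, which gives only about $n^{3/2}$ total time; your paragraph asserting that ``careful bookkeeping'' and level-indexed biclique/cutting sparsifications bring this down to $\tildeO(1)$ amortized per re-examination is precisely the nontrivial content of \cite{ChanPR11}, and nothing you wrote would let a reader reconstruct it. If the intent is merely to use the lemma, citing \cite{ChanPR11} as the paper does is the appropriate move; if the intent is to supply a self-contained proof, the decremental side still needs the actual construction.
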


Since we will choose $q\ll \sqrt{n}$, the $\tildeO(q\sqrt{n})$ term is negligible.

\subparagraph*{Update time.} The update algorithm is again identical to the case of axis-aligned segments. By Corollary~\ref{cor:segs-split-time}, the amortized cost of  equivalence-class maintenance per insertion to $Q$ is $\tildeO(n/r + q^3 + r^4 n/q)$. For deletions in $S$, define $m_i$ for $i=1,\ldots,q$ in the same way as we did for disks. Then the total cost over all deletions is bounded by
\begin{equation*}
    \OOO\left( \sum_{i=1}^q \left( \frac{qm_i}{r} + q^4 + r^4 m_j \right) + n \right) \: =\: \OOO\left( \frac{qn}{r} + q^5 + r^4 n \right),
\end{equation*}
which amortizes to $\OOO(n/r + q^4 + r^4 n/q)$ per deletion. This is the dominating term in the overall amortized update time. 
Choosing $r=n^{1/21}$ and $q=n^{5/21}$, the amortized update time minimizes to $\OOO(n^{20/21})$.

\begin{theorem}
    \label{thm:seg-basic}
    For $n$ arbitrary line segments in $\R^2$, there is a dynamic connectivity data structure with $O(1)$ query time and $O^*(n^{20/21})$ amortized update time.
\end{theorem}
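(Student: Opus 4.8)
The plan is to follow the same phase-based framework used for axis-aligned segments in Section~\ref{sec:dynconn-seg0}, substituting the arbitrary-line-segment equivalence-class structure from Corollary~\ref{cor:segs-split-time} for the axis-aligned one, and substituting the batch-insertion decremental connectivity structure of Chan, \Patrascu{}, and Roditty for the plain decremental structure. We split the update sequence into phases of $q$ updates each. At the start of a phase, let $S$ be the set of segments present that survive the phase, and let $Q$ be the (at most $q$) segments touched during the phase. We maintain: (i) the equivalence-class structure of Corollary~\ref{cor:segs-split-time} with parameter $r$; (ii) the decremental connectivity structure of the cited lemma, into which we batch-insert the at most $q$ new segments of this phase in $\tildeO(n+q\sqrt n)=\tildeO(n)$ time (using $q\ll\sqrt n$); and (iii) for each component $c$, a segment-intersection-searching structure (which in the plane has $\OOO(1)$ query and update time via multilevel partition trees). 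Throughout the phase we maintain the proxy graph $H$ exactly as in Section~\ref{sec:recap}, with class vertices for the $O(q^3)$ equivalence classes (using the improved Lemma~\ref{lem:q3-classes-objects}) and insertion vertices for the segments of $Q$, stored in a polylogarithmic dynamic graph connectivity structure.

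The update handling mirrors the axis-aligned case. Preprocessing at the phase boundary costs $\OOO(r^4 n + qn/r + q^4)$ by Corollary~\ref{cor:segs-split-time}. For an insertion in $Q$: add an insertion vertex to $H$, update the equivalence-class structure (which supports insertions to $Q$ incrementally, in amortized $\tildeO(n/r+q^3+r^4n/q)$ time), query each of the $O(q^3)$ current classes' representative component against the new segment to set the $O(q^4)$ new edges of $H$, and rebuild $H$'s connectivity structure in $O(q^4)$ time. For a deletion in $Q$: do nothing to the equivalence-class structure (as the footnote observes, we only need the forward direction of the equivalence relation). For a deletion of $s\in S$ contained in component $c$, which shatters $c$ into $c_1,\ldots,c_z$ in decreasing size with $m_i=|c_2|+\cdots+|c_z|$: update the decremental connectivity structure ($\tildeO(n)$ total over the phase), make $c_1$ a singleton class and delete it from the equivalence-class structure, report and delete the segments of $c_2,\ldots,c_z$ and rebuild their intersection-searching structures in $\OOO(m_i)$ time, then ``replay'' the insertions of $Q$ on $c_2,\ldots,c_z$ from scratch in $\OOO(q^4+\frac{qm_i}{r}+r^4m_i)$ time, match each resulting subclass to an existing class by sorting $q$-bit vectors in $\OOO(q^4)$ time, and reinsert the matched components (amortized $\OOO(|c|r^4)$ per inserted component). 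Since $|c_j|\le|c|/2$ for $j\ge2$, we have $\sum_i m_i=\tildeO(n)$ over the phase, so the deletion cost sums to $\OOO(qn/r+q^5+r^4n)$, amortizing to $\OOO(n/r+q^4+r^4n/q)$ per update, which dominates. Balancing $n/r\approx q^4\approx r^4n/q$ gives $r=n^{1/21}$, $q=n^{5/21}$, and amortized update time $\OOO(n^{20/21})$. Queries are answered in $O(1)$ time exactly as in Section~\ref{sec:recap}: if both query objects lie in $S$ and their components coincide they are connected; if one such component forms an isolated vertex of $H$ the answer is determined directly; otherwise consult $H$'s connectivity structure.

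The main obstacle is the absence of a decremental connectivity structure for arbitrary segments with simultaneously near-linear preprocessing and near-linear total update time (the best static construction needs $O(n^{4/3})$ time). This is why we cannot afford to rebuild the component structure from scratch at every phase boundary. The resolution, and the one genuinely new ingredient relative to the axis-aligned argument, is the observation that only $\le q$ segments are new at each phase boundary, so it suffices to have a decremental structure supporting \emph{batch insertion} of $q$ segments in near-linear time; the cited lemma of Chan, \Patrascu{}, and Roditty provides exactly this, and since we will take $q\ll\sqrt n$ the $\tildeO(q\sqrt n)$ batch-insertion overhead is absorbed. A secondary point requiring care is that the time bound in Corollary~\ref{cor:segs-split-time} is superlinear and depends on $r$ even for small $q$, so the three competing terms $n/r$, $q^4$, and $r^4n/q$ must be balanced jointly rather than optimizing $q$ alone; the choice $r=n^{1/21}$, $q=n^{5/21}$ equalizes them. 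Everything else—the phasing, the weighted-union amortization inside the equivalence-class structure, the proxy graph, and the $O(1)$-time query logic—transfers verbatim from the axis-aligned development.
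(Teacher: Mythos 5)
Your proposal matches the paper's argument essentially step for step: same phase structure, same substitution of Corollary~\ref{cor:segs-split-time} and the batch-insertion decremental connectivity structure of Chan, \Patrascu, and Roditty, same deletion-handling analysis with $\sum_i m_i = \tildeO(n)$, and the same three-way balancing of $n/r$, $q^4$, and $r^4n/q$ at $r = n^{1/21}$, $q = n^{5/21}$ to get $\OOO(n^{20/21})$.

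One small inaccuracy: you claim the per-component segment-intersection-searching structure in the plane has ``$\OOO(1)$ query and update time via multilevel partition trees.'' For arbitrary segments this is not true with near-linear preprocessing -- a linear-space multilevel partition tree gives query time $\OOO(\sqrt{|c|})$, and getting $\OOO(1)$ queries would require roughly quadratic space and preprocessing, which you cannot afford per phase. Fortunately this does not jeopardize the bound: the per-component structure is queried at most $q$ times per deletion (to set the proxy-graph edges incident to the new singleton class $c_1$), costing $\OOO(q\sqrt{n})$ per deletion and hence $\OOO(q\sqrt{n})$ amortized, which at $q=n^{5/21}$ is $\OOO(n^{31/42})\ll n^{20/21}$ and is absorbed. (The paper itself glosses over this point by stating the update algorithm is ``identical to the axis-aligned case,'' so the gap is more a matter of stated justification than of substance.) Apart from that, the argument is correct.
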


We remark that the same approach also works more generally for fixed-degree algebraic curves in 2D, with a larger exponent of the update bound depending on the degree (as the dimension of the lifted space gets larger in the proof of Lemma~\ref{lem:seg:cuttings}). 

\section{Further Improvement for Disks Using Separators}
\label{apd:sep}

In this section, we give a small improvement to our result for disks (from near $n^{8/9}$ to $n^{7/8}$ update time), interestingly by using separators. Specifically, we will use the following variant of Smith and Wormald's geometric separator theorem. (The original version bounds the number of disks intersecting $\partial B$ by $O(\sqrt{n})$ but assumes the input disks are disjoint.  Our variant instead bounds the stabbing number but does not require disjointness.)
For completeness, we include a proof.

\begin{lemma}[Smith and Wormald's separator lemma~\cite{SmithW98}]
    \label{lem:disk-sep}
    Given a set $S$ of $n$ disks in $\R^2$, there is an axis-parallel square $B$, such that the number of disks inside and the number of disks outside are both at most $4n/5$, and the set of all disks intersecting $\partial B$ can be stabbed by $O(\sqrt{n})$ points.  Furthermore, $B$ can be computed in $\OO(n)$ time.
\end{lemma}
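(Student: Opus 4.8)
The plan is to adapt the proof of Smith and Wormald's original separator theorem, with two modifications handling the possible overlap among the input disks. I would first classify the disks, relative to the eventual square $B$ (of side length $\ell$, determined below), into \emph{big} disks, of radius at least $w:=\ell/\sqrt n$, and \emph{small} disks, of radius less than $w$. The key new observation is that the big disks meeting $\partial B$ are \emph{automatically} stabbable by $O(\ell/w)=O(\sqrt n)$ points, no matter how $B$ is chosen: take the grid of spacing $\Theta(w)$ restricted to the $\Theta(w)$-neighborhood of $\partial B$ (which has $O(\ell/w)$ cells), together with the four corners of $B$. Indeed, if a big disk $D$ meets $\partial B$ but does not contain $\partial B$, then $\partial D$ crosses $\partial B$ at some point $x$; moving from $x$ toward the center of $D$ by $w/2$ reaches a point $y$ with $\mathrm{dist}(y,\partial B)\le w/2$ and with the closed ball of radius $w/2$ around $y$ inside $\bar D$ (here we use $\mathrm{radius}(D)\ge w$), so $D$ contains the grid point nearest $y$; and if $D$ does contain $\partial B$, then $D$ contains a corner of $B$.

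This reduces matters to choosing $B$ so that (i) the numbers of disks strictly inside and strictly outside are both at most $4n/5$, and (ii) the number of \emph{small} disks whose center lies within $w$ of $\partial B$ is $O(\sqrt n)$ --- each such disk can then be stabbed by its own center, and every small disk meeting $\partial B$ is of this form. Condition (i) is implied by the analogous condition on the disk \emph{centers} (a disk inside $B$ has its center inside $B$, etc.), which is the classical statement that a point set admits a balanced axis-parallel square separator; the slack in the constant $4/5$ is exactly what makes this robust in degenerate cases (e.g.\ when many centers are collinear, one routes an edge of $\partial B$ through them or crosses them transversally). For condition (ii) I would use the flexibility in the choice of $B$: there is a family of balanced squares of side $\ell$ obtained by translating $B$ over a region $R$ of area $\Omega(\ell^2)$, and for a fixed point $c$, a uniformly random translate puts $c$ within $w$ of $\partial B$ with probability at most $|N_w(\partial B)|/|R|=O(\ell w/\ell^2)=O(1/\sqrt n)$; summing over the at most $n$ small disks, the expected count in (ii) is $O(\sqrt n)$, so some balanced square works for both (i) and (ii). The $\OO(n)$ time bound then follows by computing such a balanced family in $\OO(n)$ time (coordinate sorting / sweeping) and locating a good translate within it using a range-counting structure on the small centers (testing $O(\log n)$ random translates, or sweeping).

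The main obstacle I expect is the prerequisite for (ii): showing that a \emph{flexible} family of balanced squares --- one spanning an $\Omega(\ell^2)$-area set of translations --- always exists, uniformly over all configurations of disk centers. This is where a careful continuity/pigeonhole argument is needed, again leaning on the $4/5$ slack: if growing or shrinking the square about a fixed point leaves the balanced regime within only a narrow range of side lengths, then a constant fraction of the centers must be concentrated in a thin annulus around a single square boundary, and one must switch to a different, translational family of balanced squares that cuts transversally through this concentration. Organizing these cases cleanly, rather than by brute-force case analysis, is the part requiring the most care; the remaining ingredients (the grid-stabbing observation for big disks, the averaging, and the $\OO(n)$-time computation) are routine adaptations of standard separator and range-searching techniques.
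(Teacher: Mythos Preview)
Your high-level plan --- split into big versus small disks at threshold $\Theta(\ell/\sqrt{n})$, stab the big boundary disks with an $O(\sqrt{n})$-point grid near $\partial B$, and average over a one-parameter family of balanced squares to bound the number of small boundary disks --- is exactly the skeleton the paper uses. The divergence is in the choice of family, and this is precisely where your self-identified obstacle bites.

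You take the family to be \emph{translates} of a fixed-size square and then need a $2$-dimensional region of balanced translates of area $\Omega(\ell^2)$ for the averaging to yield $O(\sqrt{n})$. As you note, this need not exist: if all centers coincide at one point, the balanced translates form a $1$-dimensional set (those with the point on $\partial B$), and the area-averaging argument collapses. Your proposed fix (case-split and switch families when centers concentrate in a thin annulus) could be made to work, but it is the hard part of your proof and you leave it as a sketch.

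The paper sidesteps this entirely by using a \emph{concentric scaling} family instead of translations. Let $B_0$ be the smallest axis-parallel square containing at least $n/5$ centers, with side $r$; then for every $t\in(0,1)$ the concentric square $B_t$ of side $(1+t)r$ still contains $\ge n/5$ centers, and since $B_t$ can be covered by four squares of side $<r$ (each containing $<n/5$ centers by minimality of $B_0$), it contains $<4n/5$ centers. So \emph{every} $B_t$ is balanced, with no case analysis. Discretizing $t$ into $\sqrt{n}$ equally spaced values, each small disk (radius $\le r/\sqrt{n}$) meets $\partial B_t$ for $O(1)$ values of $t$, and a direct pigeonhole gives some $t$ with $O(\sqrt{n})$ small boundary disks. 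Algorithmically this is also cleaner than your random-translate-plus-range-counting scheme: compute $B_0$ in $\OO(n)$ time, then scan the $\sqrt{n}$ candidates deterministically.

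In short: your grid-stabbing observation for big disks matches the paper, but replacing your translational family by the concentric family anchored at the smallest $(n/5)$-center square makes your ``main obstacle'' disappear.
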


\begin{proof}
    Compute smallest square $B_0$ that contains at least $n/5$ of the disk centers;
    this takes $\OO(n)$ time~\cite{Chan99}.  Say $B_0$ has center $(x,y)$ and side length $r$.  For $t\in\{\frac1b,\frac2b,\ldots,\frac{b-1}b\}$,
    let $B_t$ be the square with center $(x,y)$ and side length $(1+t)r$.
    Since $B_t$ can be covered by 4 squares of side length $<r$, we know that $B_t$ 
    contains at most $4n/5$ centers and at least $n/5$ centers.
    
    A disk of radius $\le r/b$ intersects $\partial B_t$ for at most $O(1)$ choices of $t$.
    Thus, we can find $t\in\{\frac1b,\frac2b,\ldots,\frac{b-1}b\}$ in $O(n)$ time such that $\partial B_t$ intersects $O(n/b)$ disks
    of radius $\le r/b$.
    On the other hand, the disks of radius $>r/b$ intersecting $\partial B_t$ can be stabbed by $O(b)$ points.
    Setting $b=\sqrt{n}$, we see that $B_t$ satisfies the requirement.
\end{proof}

We apply the separator lemma to give a new divide-and-conquer algorithm for computing equivalence classes for disks.  This algorithm is static (not supporting insertions to $Q$).

\begin{lemma}
For $n$ disks in $\R^2$, we can solve Subproblem~\ref{prb:compute-eq-cls} in 
$\OO(n + q^6 + q^2\sqrt{n})$ time.
\end{lemma}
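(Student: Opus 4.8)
The plan is to replace the incremental class-splitting approach by a static \emph{divide-and-conquer} on $S$ driven by the separator lemma (Lemma~\ref{lem:disk-sep}). First I would apply Lemma~\ref{lem:disk-sep} to $S$ in $\OO(n)$ time to obtain an axis-parallel square $B$ and a stabbing set $P$ of $O(\sqrt n)$ points, and split $S$ into the disks $S_{\rm in}$ contained in $B$, the disks $S_{\rm out}$ disjoint from $B$, and the disks $S_\partial$ meeting $\partial B$, with $|S_{\rm in}|,|S_{\rm out}|\le 4n/5$ and every disk of $S_\partial$ covering some point of $P$. The point of the split is that $S_{\rm in}$ and $S_{\rm out}$ lie on opposite sides of $\partial B$, so no disk of $S_{\rm in}$ intersects a disk of $S_{\rm out}$: every connection between the two sides is routed through $S_\partial$. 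Moreover, since every disk of $S_\partial$ contains one of the $O(\sqrt n)$ points of $P$, and disks through a common point are pairwise intersecting, $S_\partial$ decomposes into at most $|P| = O(\sqrt n)$ connected components $K_1,\dots,K_t$; these, together with the subset of $P$ that each $K_j$ covers, can be found in $\OO(n)$ time. The $K_j$'s will serve as a small ``interface'' linking the two subproblems.

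Next I would recurse. On each side I would recursively compute not just equivalence classes but the actual partition of the disks into connected components, each component labelled by the subset of $Q$ it meets \emph{and} the subset of the current interface regions $K_1,\dots,K_t$ it meets. (Tracking incidences to the $K_j$'s, not merely to $Q$, is what lets the recursion compose: two components with the same $Q$-incidence may attach to different $K_j$'s, and only after merging through the interface do their final $Q$-vectors agree.) Because $S_\partial$ is removed before recursing, $|S_{\rm in}|+|S_{\rm out}|\le n$, so the subproblem sizes telescope and the $\OO(n)$-type costs sum to $\OO(n)$ over the whole recursion. To combine at a node I would build a small proxy graph $H$ whose vertices are the components returned by the two recursive calls, the interface pieces $K_1,\dots,K_t$, and the objects of $Q$, with edges recording the relevant intersections (component--$K_j$, component--$Q$, $K_j$--$Q$; the $K_j$'s are pairwise disjoint); the connected components of $S$ as seen at this node correspond to the connected components of $H$, which I extract with union--find and relabel with their $Q$-incidence vectors. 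At the very end, grouping the components of $S$ by $Q$-incidence vector gives the desired equivalence classes; by the new combinatorial lemma (Lemma~\ref{lem:q3-classes-objects}) there are only $O(q^3)$ distinct vectors, so the bookkeeping at each combine step (radix-sorting or matching the $O(q^3)$ class labels, each a $q$-bit vector) is cheap, and the interface terms over the $O(\log n)$ levels are what produce the $q^6$ and $q^2\sqrt n$ summands in the stated bound.

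The main obstacle I expect is the combine step, and in particular computing the incidences between the recursively discovered components of $S_{\rm in}$ (resp.\ $S_{\rm out}$) and the $O(\sqrt n)$ interface pieces $K_j$ within near-linear time: a naive query of each disk of $S_\partial$ against a component (non-)intersection reporting structure (Lemma~\ref{lem:disk-comp-search}) is too slow, so one must batch the queries, exploiting that the disks of $S_\partial$ through a fixed stabbing point $p_i$ form a blob star-shaped about $p_i$, so that only $O(\sqrt{n_v})$ essentially different queries arise at a node of local size $n_v$, and using intersection/nearest-neighbour machinery so their total cost stays $\OO(n_v)$ (plus output). The second delicate point is correctness: one must argue at every level that a component of the local sub-instance, once merged across the $K_j$'s, is exactly a connected component of $S$ restricted to that node's region, which is where the ``all crossing edges pass through $S_\partial$'' observation must be invoked carefully, and one must make sure the interface does not accumulate across levels — i.e.\ the interface handed to a recursive call of local size $n_v$ has size $O(\sqrt{n_v})$, not $O(\sqrt n)$. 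Verifying that the resulting recurrence indeed solves to $\OO(n + q^6 + q^2\sqrt n)$ is then routine.
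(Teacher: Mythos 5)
There is a genuine gap here, and it is a fundamental one concerning the accounting of recursion costs. Your recursion is parametrized only by the number of disks $n$, and you recurse until $n$ is small (you don't actually say when you stop). Since you pay a per-node combine cost that depends on $q$ (you yourself note that reconciling $O(q^3)$ class labels, each a $q$-bit vector, costs about $\OO(q^4)$ per node, plus the $q\sqrt{n_v}$-type interface work), and since a separator-driven recursion on $n$ alone produces far more than $O(\log n)$ nodes---the tree has $O(\log n)$ \emph{depth} but can have $\Omega(n)$ nodes if you recurse to constant size, or at least $\Omega(n/q)$ nodes if you stop at size $\sim q$---the $q$-dependent combine terms do \emph{not} sum to $\OO(q^6 + q^2\sqrt n)$. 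The paper's proof avoids exactly this problem by introducing a second recursion parameter $X$, the number of ``special points'' (boundary intersection points and inflection points of the disks in $Q$), with $X = O(q^2)$, and by stopping the recursion not when $n$ is small but when $X=0$: in that base case the $Q$-arcs inside the region are pairwise disjoint and monotone, so equivalence classes can be extracted by sorting in $\OO(n + q^4)$ time without any further recursion. The two-parameter recurrence $T(n,X) = T(n_1,X_1) + T(n_2,X_2) + \OO(n + q^4 + q\sqrt n)$ with $X_1 + X_2 \le X$ then solves to $\OO(n + q^4 X + q\sqrt{nX})$, and it is the bound $X = O(q^2)$ that produces the $q^6$ and $q^2\sqrt n$ summands---not the depth of the recursion. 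Without some analogue of this parameter, your recurrence has no reason to give the claimed bound.

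The second, subsidiary gap is the combine step you flag as ``the main obstacle.'' Your plan passes an interface (the pieces $K_1,\dots,K_t$) down to the recursive calls and requires each recursive call to report, for each of its components, which $K_j$'s it meets; you also worry (correctly) about the interface not accumulating across levels. The paper's proof sidesteps both issues entirely and more simply: the connected components of $S$ are computed once up front (each preprocessed for $\OO(1)$-time disk-intersection queries via additively-weighted nearest-neighbor search), and the recursion is on a region $R$ together with the components lying \emph{entirely inside} $R$. Components crossing the separator boundary---at most $O(\sqrt n)$ of them, by the stabbing-point argument---are not recursed on at all; their $Q$-incidence vectors are computed directly by testing them against all $q$ disks of $Q$, paying $\OO(q\sqrt n)$ at that node and nothing further. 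No interface is ever handed to a child call, so there is nothing to prevent from accumulating, and the component-to-interface incidence problem you identify simply does not arise. In short: your proposal identifies the right tool (the separator lemma) but is missing the special-points parameter that makes the recurrence close, and it chooses an interface-passing combine step that is both harder than necessary and left unresolved.
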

\begin{proof}
We first preprocess each component $c$ so that we can quickly determine whether a query disk intersects $c$.
This reduces to nearest neighbor search with additive weights, and so there is a data structure with $\OO(|c|)$ preprocessing
time and $\OO(1)$ query time.

Given a region $R\subseteq \R^2$ and a set $C$ of components completely inside $R$ with total size $n$, we describe a recursive algorithm to
compute the equivalence classes of the components $C$ with respect to $Q$.  The region $R$ is assumed to be
an orthogonal polygon, possibly with holes, with $\OO(1)$ complexity.  (Initially, $R=\R^2$.)

Define the \emph{special points} to be the intersection points between the boundaries of two disks in $Q$, and the inflection points
(i.e., the topmost, bottommost, leftmost, and rightmost points) of the disks in $Q$.

\subparagraph*{Base case.} Suppose that $R$ contains no special points.
Let $\Gamma$ be arcs formed by intersecting the boundaries of disks in $Q$ with the region $R$.
Since $R$ contains no special points, the arcs in $\Gamma$ are disjoint, and
each arc in $\Gamma$ has endpoints lying on two different edges of $\partial R$ (otherwise, $R$ would contain an inflection point).
Two arcs have the same \emph{type} if their endpoints lie on the same pair of edges of $\partial R$.
There are $O(1)$ types.  We can sort the arcs of the same type. (See Figure~\ref{fig:disk-sep}.)
The arcs of the same type intersecting a component $c$ form a contiguous subsequence (i.e., an interval) in the sorted sequence,
which can be determined by binary search in $\OO(|c|)$ time.
Thus, we can compute the $O(q^3)$ equivalence classes of the components with respect to $\Gamma$ in $\OO(n)$ time.

\begin{figure}
    \centering
    \includegraphics[page=1,width=.3\textwidth]{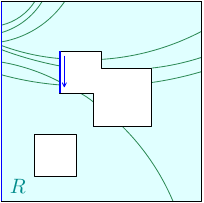}
    \caption{Arcs of each type can be sorted. For example, arcs with endpoints on the two blue edges can be sorted as illustrated by the blue arrow.}
    \label{fig:disk-sep}
\end{figure}

There may be two components $c$ and $c'$ that are equivalent with respect to $\Gamma$ but not equivalent with respect to $Q$.
But this can happen only if $c$ is completely inside a disk $s\in Q$ and $c'$ is completely outside $s$, or vice versa;
in such a case, $c$ and $c'$ cannot intersect any arc in $\Gamma$ (otherwise, the arc would intersect $\partial s$ inside $R$).
For components intersecting no arcs in $\Gamma$, we can put them in the same class if they lie in the same face in the
arrangement of $\Gamma$ inside $R$; there are $O(q)$ such classes, and
this step can be done in $\OO(1)$ time per component via point location~\cite{BergCKO08}.

As a result, we obtain $O(q^3)$ classes, which are a refinement of the actual equivalence classes with respect to $Q$.
We can determine the actual equivalence classes by testing whether each disk in $Q$ intersects (a representative component of)
each class intersects each disk, in $\OO(q\cdot q^3)$ total time.

\subparagraph*{Recursion.}  Suppose that $R$ contains at least one special point.
We apply Smith and Wormald's separator theorem to obtain a square $B$.
We recursively solve the subproblem for $R\cap B$ and the components completely inside $R\cap B$,
and the subproblem for $R\setminus B$ and the components completely inside $R\setminus B$.
There are at most $O(\sqrt{n})$ components intersecting $\partial B$
(because disks stabbed by a common point are in a common component).
We can determine the actual equivalence classes for these ``boundary components''
by  testing whether each disk in $Q$ intersects each boundary component, in $\OO(q\cdot \sqrt{n})$ time.
We can combine the equivalence classes of the two subproblems and the boundary components in $\OO(q\cdot q^3)$ additional time.

\subparagraph*{Analysis.}  
Note that because the recursion has logarithmic depth, each region $R$ generated is indeed  an orthogonal polygon with $\OO(1)$ complexity.
Letting $X$ be the number of special points in $R$, we obtain the
recurrence
\[ T(n,X)\le \max_{n_1+n_2\le n,\ n_1,n_2\le 4n/5, X_1+X_2\le X}  ((T(n_1,X_1)+T(n_2,X_2) + \OO(n+q^4 + q\sqrt{n})),
\]
with the base case $T(n,0)=\OO(n+q^4)$.

The recurrence solves to $T(n,X)=\OO(n + q^4 X + q\sqrt{nX})$.
As the number of special points is $O(q^2)$, we obtain a running time of $\OO(n + q^6 + q^2\sqrt{n})$.
\end{proof}

We can now use the new lemma to speed up our previous
method for dynamic connectivity for disks.
We use the same equivalence-class data structure as before, but make only one change:
during a deletion in $S$, when we compute the equivalence classes among $c_2,\ldots,c_z$, we switch to the above lemma (since this step is a static problem).

\subparagraph*{Update time.} 
The amortized cost per insertion to $Q$ is the same as before, i.e.,  $O^*(q^{3/5} n^{4/5} + n/q)$. With
the new static equivalence-class algorithm,
the total cost over all deletions in $S$ is now bounded by
\begin{equation*}
\begin{aligned}
    O^*\left( \sum_{i=1}^q \left( q^{6} + q^2 \sqrt{m_i} + m_i \right) + n \right)
    \:=\: O^*\left( q^7 + q^{5/2} \sqrt{n} + n \right),
\end{aligned}
\end{equation*}
which amortizes to $O^*(q^6 + q^{3/2}\sqrt{n} + n/q)$ per deletion.
The insertion cost now dominates.  Choosing $q=n^{1/8}$, we achieve update time $O^*(n^{7/8})$.

\begin{theorem}
    For $n$ disks in $\R^2$, there is a dynamic connectivity data structure with $O(1)$ query time and $O^*(n^{7/8})$ amortized update time.
\end{theorem}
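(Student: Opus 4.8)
The plan is to take the dynamic connectivity structure for disks from the proof of Theorem~\ref{thm:disk-basic} (Appendix~\ref{apd:dynconn-disk}) essentially verbatim and change exactly one subroutine. During a phase we keep: the equivalence-class data structure of Corollary~\ref{cor:disk-split-time}; the decremental connectivity structure for the disks of $S$ of Kaplan et al.~\cite{KaplanKKKMRS22} (with $\OO(n)$ preprocessing and $\OO(n)$ total deletion time); an additively-weighted intersection-searching structure per component~\cite{KaplanMRSS20} (with $\OO(1)$ query and update time); and the proxy graph $H$ stored in a dynamic graph connectivity structure. Queries, insertions and deletions in $Q$, and the maintenance of $H$ are handled exactly as in the axis-aligned case of Section~\ref{sec:dynconn-seg0}: the amortized cost per insertion to $Q$ stays $\OOO(q^{3/5}n^{4/5}+n/q)$ (Corollary~\ref{cor:disk-split-time}) plus a negligible $O(q^4)$ to rebuild the connectivity of $H$, deletions in $Q$ require no work on the equivalence-class structure, and queries take $O(1)$ time.

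The only modification is in handling a deletion of a disk $s\in S$. As before, if $c$ is $s$'s component and it splits into $c_1,\dots,c_z$ in decreasing size order with $m_i=|c_2|+\cdots+|c_z|$ (so $|c_j|\le|c|/2$ for $j\ge 2$ and hence $\sum_i m_i=\OO(n)$ over the phase), I would update the decremental connectivity structure ($\OO(n)$ in total), rebuild the intersection-searching structures for $c_2,\dots,c_z$ in $\OOO(m_i)$ time, make $c_1$ a singleton class and delete it from the equivalence-class structure, recompute the equivalence classes among $c_2,\dots,c_z$ with respect to $Q$, match each resulting class $L'$ against the existing $O(q^3)$ classes (viewed as $q$-bit vectors and sorted in $\OO(q^4)$ total time), and reinsert the components of any matched $L'$ into the equivalence-class structure at $\OOO(|c|)$ per component. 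Here, instead of ``replaying'' the insertions of $Q$ from scratch --- which costs $\OOO(m_i+q^{8/5}m_i^{4/5})$ and gives only the $n^{8/9}$ bound --- I would invoke the new static separator-based algorithm (the preceding lemma, built on Lemma~\ref{lem:disk-sep}), which solves Subproblem~\ref{prb:compute-eq-cls} on $m_i$ disks in $\OO(m_i+q^6+q^2\sqrt{m_i})$ time; this is legitimate since the recomputation is a purely static problem.

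It then remains to add up the costs. The amortized cost per insertion to $Q$ is unchanged at $\OOO(q^{3/5}n^{4/5}+n/q)$. Over all deletions in $S$, the recomputation cost totals $\OOO\left(\sum_i(m_i+q^6+q^2\sqrt{m_i})+n\right)$; using $\sum_i m_i=\OO(n)$, that there are at most $q$ terms, and Cauchy--Schwarz ($\sum_i\sqrt{m_i}\le\sqrt{q}\,\sqrt{\sum_i m_i}$), this is $\OOO(q^7+q^{5/2}\sqrt{n}+n)$, i.e.\ $\OOO(q^6+q^{3/2}\sqrt{n}+n/q)$ amortized per deletion, which dominates the remaining per-deletion bookkeeping (the $\OO(n)$ connectivity updates, the $\OO(q^4)$ vector sorting, the $\OOO(m_i)$ structure rebuilds, and the component reinsertions, the last charged against $Q$-insertions). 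Setting $q=n^{1/8}$ balances the insertion cost $q^{3/5}n^{4/5}$ against $n/q$, both equal to $n^{7/8}$, while the deletion terms $q^6=n^{3/4}$ and $q^{3/2}\sqrt{n}=n^{11/16}$ are smaller, so the overall amortized update time is $\OOO(n^{7/8})$ with $O(1)$ query time. I expect the genuinely new content, and the place needing the most care, to be the static separator-based lemma driving the recomputation step --- in particular verifying that its divide-and-conquer recurrence $T(n,X)=\OO(n+q^4 X+q\sqrt{nX})$ together with the $O(q^2)$ bound on special points solves to $\OO(n+q^6+q^2\sqrt{n})$, and that the square $B$ of Lemma~\ref{lem:disk-sep} correctly bounds the number of ``boundary components'' by $O(\sqrt{n})$ (because disks stabbed by a common point lie in a common component).
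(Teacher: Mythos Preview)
Your proposal is correct and follows essentially the same approach as the paper: you keep the disk data structure from Theorem~\ref{thm:disk-basic} intact, replace only the static recomputation of classes among $c_2,\dots,c_z$ during a deletion in $S$ by the separator-based lemma, and carry out the identical cost analysis (including the same $\sum_i\sqrt{m_i}\le\sqrt{q}\sqrt{\sum_i m_i}$ step and the same choice $q=n^{1/8}$). Your closing remarks about where the real work lies---the recurrence $T(n,X)=\OO(n+q^4X+q\sqrt{nX})$ and the $O(\sqrt{n})$ bound on boundary components via stabbing points---accurately identify the content of the preceding lemma, which the paper proves separately.
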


\bibliographystyle{plainurl}
\bibliography{reference}

\appendix

%\subsection{Proof of Lemma~\ref{lem:disk-sep}}

\end{document}